\numberwithin{theorem}{section}
\numberwithin{equation}{section}
\renewcommand{\algorithmiccomment}[1]{\bgroup\hfill//~#1\egroup}
\newcommand{\Ham}{{\mathcal H}}
\newcommand{\ep}{\varepsilon}
\newcommand{\R}{\mathbb{R}}
\newcommand{\C}{\mathbb{C}}
\newcommand{\Z}{\mathbb{Z}}
\newcommand{\dd}{{\,{\rm d}}}
\newcommand{\m}{\mathfrak{m}}
\newcommand{\Gap}{\mathsf{gap}_+}
\newcommand{\gap}{\mathsf{gap}_-}
\newcommand{\tr}{\mathrm{Tr}}
\newcommand{\A}{\mathsf{A}}
\newcommand{\cc}{\mathscr{C}}
\newcommand{\dist}{\mathrm{dist}}
\newcommand{\adm}{\mathrm{Adm}}
\newcommand{\br}{\bm{r}}
\newcommand{\bxi}{\bm{\xi}}
\newcommand{\bzeta}{\bm{\zeta}}
\newcommand{\D}{\mathfrak{d}}
\newcommand{\Dr}{\mathfrak{d}^{\mathrm{ref}}}
\newcommand{\nob}{N_{\mathrm{b}}}
\newcommand{\na}{M}
\newcommand{\eF}{\ep_{\mathrm{F}}}
\newcommand{\CT}{\gamma_{\rm{CT}}}
\newcommand{\loc}{\rm{loc}}
\newcommand{\dmr}{\rho^{\mathrm{ref}}}
\newcommand{\dmu}{\rho(u)}
\newcommand{\dmuEp}{\rho(u_{\ep})}
\newcommand{\Hr}{\Ham^{\mathrm{ref}}}
\newcommand{\Hxi}{\Ham_{\pmb{\xi}}}
\newcommand{\etaP}{\eta_{+}}
\newcommand{\etaM}{\eta_{-}}
\newcommand{\ccRef}{c_{\mathrm{r}}}
\newcommand{\reg}{\delta_{*}}
\def\asTB{\textnormal{\bf (TB)}}
\newcommand{\asGap}{\textup{\textbf{(Gap)}}}
\newcommand{\asReg}{\textup{\textbf{(Reg)}}}
\DeclareMathOperator{\re}{Re}
\title{Nearsightedness in Materials with Indirect Band Gaps}
\author{Juerong Feng}
\address{
School of Mathematical Sciences, Beijing Normal University, China.
}
\email{juerong\_f@mail.bnu.edu.cn} 
\author{Huajie Chen}
\address{
School of Mathematical Sciences, Beijing Normal University, China.
}
\email{chen.huajie@bnu.edu.cn}
\author{Christoph Ortner}
\address{
University of British Columbia, 1984, Mathematics Road, Vancouver, BC, Canada 
}
\email{ortner@math.ubc.ca}
\author{Jack Thomas}
\address{
School of Mathematics, University of Minnesota Twin Cities, Minneapolis, MN
 55455, USA
}
\email{thom9218@umn.edu}
\date{\today}
\begin{document}

\begin{abstract}
 We investigate the nearsightedness property in the linear tight binding model at zero Fermi-temperature. We focus on the decay property of the density matrix for materials with indirect band gaps. By representing the density matrix in reciprocal space, we establish a qualitatively sharp estimate for the exponential decay rate in homogeneous systems, possibly with localized perturbations.
 This work refines the estimates presented in  (Ortner, Thomas \& Chen, 2020) for systems with small band gaps. 
\end{abstract}

\maketitle
\thispagestyle{empty}

\section{Introduction}
\label{sec:Introduction}
The electronic structure determines a wide range of physical and chemical properties of materials. For its computation, density functional theory (DFT) has been successfully applied for many decades \cite{finnis2003interatomic, martin2020electronic}. However, even in the minimalistic tight binding models, practical implementations typically incur a high computational cost, scaling cubically with the system size. One approach to alleviate this high cost is through linear scaling algorithms \cite{bowler2012methods, goedecker1999linear, kohn1996density, niklasson2011density}, which rely on a notion of locality of electronic systems, often known as ``nearsightedness”. 

First introduced by Kohn~\cite{kohn1996density, prodan2005nearsightedness}, nearsightedness describes the principle that interactions in many-atom systems are predominantly localised. Mathematically, this corresponds to the exponential off-diagonal decay of the density matrix. The decay rate determines the error committed when truncating it during a numerical simulation, constituting the foundation for developing linear scaling techniques in electronic structure calculations.

Therefore, exploring the dependence of the decay rate on various factors is important and has provoked extensive discussion, both analytically and numerically \cite{benzi2013decay, chen2016qm, goedecker1998decay, jewski2004exact,  ortner2020locality}: it is determined by the band gap in insulators and by the Fermi-temperature in metals. 
However, these results may not always be sufficiently precise to account for the exponential decay observed in some materials with small band gaps, e.g. in semiconductors like silicon and germanium. Numerical results from \cite{ortner2020locality} show that in the silicon system, the decay is nearly identical to the carbon system even though the former has a much smaller band gap.

The main purpose of this paper is to explore a subtle point that, to the best of our knowledge, has been missed in the nearsightedness discussion to date:  the decay rate of the density matrix depends not on the {\em gap} (the most common notion of a band gap) but on the {\em vertical gap}; cf.~Figures~\ref{fig:gap_type} and~\ref{fig:gap_bz}.

Our main results establishes this, for the case of linear tight binding models at zero Fermi-temperature, and show that this result is stable under perturbations.

This result is important because the vertical gap can be much larger than the gap, especially in small-gap semiconductors such as ${\rm Mg}_2{\rm Si}$; cf. Figure~\ref{fig:gap_bz}. The result explains why some small gapped systems still exhibit excellent exponential localization and linear scaling methods remain applicable. 

Next, we explore whether such improved locality estimates also generalize to interatomic forces, which is called {\em strong locality} in \cite{Cances2025, ortner2020point}.
Through a combination of analytical and numerical empirical results we show that this is not the case. This negative result demonstrates that locality of interatomic forces is a fundamentally stronger concept than classical nearsightedness.

\subsection{Outline}
\label{subsec:outline}
In \Cref{sec:tight_binding_models}, we introduce the physical background and linear tight binding models. We then provide definitions for the gap and vertical gap in reciprocal space.
In \Cref{sec:results}, we present the main results of this work. We begin by discussing the dependence of nearsightedness principle in homogeneous systems, and then extend the analysis to strong locality by investigating the decay rate for the derivatives of the density matrix. Under additional regularity assumption, we further provide sharp estimates for both cases.
Finally, we 
extend our analysis to inhomogeneous systems with bounded perturbations. 
In \Cref{sec:numerical}, 
we provide numerical experiments to verify our analytical results. The main conclusions of this paper are summarised in \Cref{sec:Conclusions}. 
All proofs are collected in \Cref{sec:Proofs of the Main Results}. 
In the appendix, we provide details of the Bloch transform, along with the specifics of numerical experiments and estimates for the exponential sum.

\subsection{Notation}
\label{subsec:Notations}
The symbol $\|\cdot\|$ and $|\cdot|$ will denote the $\ell^2$ and Euclidean norms on $\R$ or $\C$, respectively. The subscript of $\|\cdot\|$ indicates the norm is taken in which sense. In particular, we will use $\|\cdot\|_{\rm F}$ to denote the Frobenius norm for matrices. The ball of radius $R$ about $\ell$ and ${\bf 0}$ will be denoted by $B_{R}(\ell)$ and $B_R$, respectively. For a shifted lattice,
we will write $\Lambda - \ell := \{x - \ell: x \in \Lambda \}$. For two sets $A$ and $B$, we denote the Minkowski sum of them by $A + B := \{ \bm{a} + \bm{b} ~|~ \bm{a}\in A, \bm{b} \in B \}$, which is formed by adding each vector in $A$ and $B$. For a self-adjoint operator $T$, the spectrum of $T$ will be denoted by $\sigma(T)$.

We will use $\oint$ to denote contour integral and $\dashint$ to denote the average integral over its domain, that is,
\[
\dashint_{\Omega} := \frac{1}{|\Omega|}\int_{\Omega},
\]
where $|\Omega|$ is the volume of the integral domain $\Omega$. 

The symbol $C$ is a positive generic constant that may vary between successive lines of estimation. When estimating rates of decay, $C$ consistently remains independent of the system size and lattice position. The dependencies of $C$ will be typically evident from the context or explicitly stated. For simplicity, we sometimes write $f \lesssim g$ to mean $f \leq Cg$ for certain generic positive constant as above.

\section{Tight Binding Models}
\label{sec:tight_binding_models}

\subsection{Atomistic Model and Admissible Configurations}
\label{sec:config}
We begin with a homogeneous multi-lattice $\Lambda$ defined as follows: Let $\A \Z^d$ be a Bravais lattice (where $\A$ is non-singular and $d$ is the dimension of the system) and $\{p_i\}_{i=1}^{\na}\subset \R^d$ be a set of $\na$ shift vectors. 
Then, we define
\begin{align}
    \label{eq:lattice}
    \Lambda := \bigcup_{i = 1}^{\na} \big(p_i + \A \Z^d\big)
\end{align}
which is formed by taking the union of shifted Bravais lattice $\A\Z^d$ along $p_i$, $i = 1, \dots, \na$. We denote by $\Gamma$ a unit cell of $\A \Z^d$ including $\na$ atoms, and by $\Gamma^*$ the corresponding unit cell of the reciprocal lattice $2\pi\A^{-\mathrm{T}}\Z^d$. For example, we can simply take $\Gamma:= \A[-\frac{1}{2}, \frac{1}{2})^d $ and $\Gamma^*:= 2\pi \A^{-\mathrm{T}}[-\frac{1}{2}, \frac{1}{2})^d$, but other choices are possible.

Following \cite{chen2016qm,  ortner2020point, ortner2020locality}, we consider displacements $u: \Lambda \rightarrow \R^d$ and thus configurations of the form $\big( \ell + u(\ell) \big)_{\ell \in \Lambda}$. For brevity, we write 
\begin{align}
    \br_{\ell k }(u):=\ell + u(\ell) - k - u(k) 
    \quad \text{and} \quad
    r_{\ell k}(u) := |\br_{\ell k }(u)|.
\end{align}
When $u(\ell) =0$ for all $\ell \in \Lambda$, we will omit $u$ and take $\br_{\ell k }$, $r_{\ell k}$ for the remainder of the paper. 
Let us define the finite difference of $u$ (the strain) by
\begin{align}
D_{\rho} u(\ell)\coloneqq u(\ell + \rho) - u(\ell), \quad \forall \ell \in \Lambda,~ \rho \in \Lambda - \ell.
\end{align}
The infinite difference stencil is defined to be $Du(\ell)\coloneqq \{D_{\rho}u(\ell)\}_{\rho\in\Lambda - \ell}$, endowed with the norm 
\begin{align}
\label{eq:l2_norm}
    \|Du\|_{\ell_{\Upsilon}^2}:=\left(\sum_{\ell \in \Lambda} \left|Du(\ell)\right|^2_{\Upsilon}\right)^{1 / 2}
    ~~\text{with}~~
    \left|Du(\ell)\right|^2_{\Upsilon} :=
    \sum_{\rho \in \Lambda-\ell} e^{-2 \Upsilon|\rho|}\left|D_\rho u(\ell)\right|^2.
\end{align}
Since all of the semi-norms $\|D\cdot\|_{\ell^2_{\Upsilon}}$ for $\Upsilon>0$ are equivalent~\cite{chen2019geometry}, we will fix $\Upsilon>0$ for the remainder. 
We wish to consider the set of finite energy displacements with the following non-interpenetration condition:
\begin{align}
\label{eq:adm_configs}
    \adm(\Lambda):=\left\{u: \Lambda \rightarrow \R^d \colon  \|D u\|_{\ell_{\Upsilon}^2}<\infty, \,\,  \exists \,  \m > 0 ~\text{with}~ r_{\ell k}(u)\geq \m 
    \,\, \forall \ell, k \in \Lambda
    \right\}.
\end{align}

\subsection{Tight Binding Model}
\label{sec:TB}
Let $\nob$ denote the number of atomic orbitals per atom, indexed by $1\leq a, b \leq \nob$. To each atomic site $\ell \in \Lambda$, we denote the atomic species by $z_\ell$. For a given admissible configuration $u \in \mathrm{Adm}(\Lambda)$, the state for the bond $(\ell, k) \in \Lambda \times \Lambda$ is defined by ${\bm x}_{\ell k}(u) = (\br_{\ell k}(u), z_\ell, z_k)$. For a function $f({\bm x}_{\ell k})$ we write $\nabla f({\bm x}_{\ell k}) := \nabla_{\br_{\ell k}} f( (\br_{\ell k}, z_\ell, z_k) )$. 

The linear tight binding Hamiltonian is given by the following:

\vspace{.5em}

\begin{flushleft}
    $\asTB$ ~
    Suppose that the Hamiltonian takes the form of 
    \begin{align}
    \label{eq:ham}
    \Ham(u)_{\ell k }^{ab} = h^{ab}\big({\bm x}_{\ell k}(u)\big) + \delta_{\ell k} \sum_{m \neq \ell} t^{ab}({\bm x}_{\ell m}(u))
    \end{align} 
for $\ell, k \in \Lambda$ and $1\leq a, b \leq \nob$. 
The functions $h^{ab}, t^{ab}: \R^d \times \{z_{\ell} \}_{\ell \in \Lambda} \times \{z_{\ell}\}_{\ell \in \Lambda}\rightarrow \R$ are continuously differentiable (in the first variable) with
\begin{align}
\label{eq:ham_decay}
    & \left|h^{ab}({\bm x}_{\ell k})\right| +
    \left|t^{ab}({\bm x}_{\ell k})\right| 
    +\left|\nabla h^{ab}({\bm x}_{\ell k})\right| 
    + \left|\nabla t^{ab}({\bm x}_{\ell k})\right|
    \leq h_0 e^{-\gamma_0r_{\ell k}} 
\end{align}
for some $h_0, \gamma_0 > 0$. 
Moreover, we assume that $h^{ab}\big((\bm r_{\ell k}(u), z_\ell, z_k)\big) = h^{ba}\big((-{\bm r}_{\ell k}(u), z_k, z_\ell)\big)$.
\end{flushleft}

\vspace{.5em}

In the following, we will denote by $\Ham^{\mathrm{ref}} \coloneqq \Ham(\bm 0)$ the reference Hamiltonian.
 
\begin{remark}
    We note that the constants $h_0, \gamma_0$ in \eqref{eq:ham_decay} may be chosen to be independent of atomic species.
    Additionally, \eqref{eq:ham_decay} requires that the dependence of $\Ham(u)^{ab}_{\ell k}$ on site $m$ decays exponentially in $r_{\ell m}(u) + r_{m k}(u)$. That is, we are assuming long-range Coulomb interactions are screened, a common assumption in many practical tight binding models~\cite{cohen1994tight, mehl1996applications, papaconstantopoulos1997tight}. We assume this to separate Coulomb type long-range effects from those arising due to small band gaps.
    
    Under the assumption $\mathbf{(TB)}$, we have $\sigma\big(\Ham(u)\big) \subset [\underline{\sigma}, \overline{\sigma}]$, where $\underline{\sigma}$,  $\overline{\sigma}$ are real and only depend on $d, \m, h_0, \gamma_0$, while independent of the system size. The detailed proof leveraging the Gershgorin circle theorem has been given in \cite{chen2016qm}. 
    
    The number of orbitals, $\nob$,  generally depends on the atom species and thus on the atom sites. Only for the sake of simplicity of notation we assume this number is the same for all species. This assumption is easily avoided; see \cite{ortner2020locality} for a possible approach.
\end{remark} 

\subsection{Direct and Indirect Band Gaps}
\label{sec:Gap} 
Next, we review the concepts of direct and indirect band gaps. For this purpose, we first recall the Bloch transform, which provides a decomposition of $\Ham^{\rm ref}$, denoted by $\Hxi$ for all $\bxi \in \Gamma^*$. For more details, see \Cref{appd:bloch_trans}. By applying an eigenvalue decomposition for $\Hxi$, we can obtain energy band structure of the system and subsequently distinguish between direct and indirect band gaps. The Bloch fibres of $\Ham^{\rm ref}$ are given by

    \begin{align}
    \label{eq:block_ham}
    \big[\Hxi\big]^{ab}_{\ell_0 k_0} \coloneqq \sum_{\alpha\in \mathbb Z^d} [\Hr]^{ab}_{\ell_0 + \mathsf{A}\alpha, k_0} e^{- i (\ell_0 - k_0 +\mathsf{A}\alpha) \cdot \bxi} \quad \forall \bxi \in \Gamma^*,
    \end{align}
for all $\ell_0, k_0 \in \Lambda_0 \coloneqq \Gamma \cap \Lambda$ and $1\leq a, b \leq \nob$.
 The size of the matrix $\Hxi$ is $\na\nob\times \na\nob$. For $\ell \in \Lambda$, we write $\ell_0$ for the unique site in $\Lambda_0$ satisfying $\ell -\ell_0 \in \mathsf{A}\mathbb Z^d$.
Denoting the eigenvalues of $\Hxi$ by $\ep_1(\bxi) \leq \cdots \leq \ep_{\na\nob}(\bxi)$, we have
\begin{align}
    \label{eq:spectrum}
         \sigma(\Hr) = \bigcup_{j=1}^{\na\nob}\left\{\ep_j(\bxi): ~\bxi \in \Gamma^*\right\},
    \end{align}
where $\ep_j: \Gamma^* \to \R$ are referred to as the energy bands \cite{kittel2018introduction}. 
    
We focus exclusively on systems with a positive band gap, such as insulators and semi-conductors, where the Fermi level, $\eF$, lies inside the band gap:

\vspace{.5em}

\begin{flushleft}
    $\asGap$ ~  The Fermi level $\eF$ separates the energy bands into the valence bands and conduction bands \cite{cox1987electronic}:
    \begin{align}
    \eF \not \in \sigma(\Hr)  \quad\text{and}\quad \ep_{N_0}(\bxi) < \eF < \ep_{N_0+1}(\bxi) \quad\forall \bxi \in \Gamma^*,
    \end{align}
    for some $N_0$, the total number of occupied states in the system.
\end{flushleft}

\vspace{.5em}

In semiconductors, one can consider two types of band gaps $\gap \leq \Gap$ defined as 
\begin{align}
\label{eq:Gap}
\Gap &\coloneqq \min_{\bxi \in \Gamma^*} \big( 
\ep_{N_0+1}(\bxi) - \ep_{N_0}(\bxi)
\big), \qquad \text{and} \\ 
\label{eq:gap}
    \gap
     &\coloneqq \min_{\bxi \in \Gamma^*} \ep_{N_0+1}(\bxi) - \max_{\bxi \in \Gamma^*} \ep_{N_0}(\bxi).  
\end{align}

\begin{figure}[H]
    \centering
    \subfigure{
    \includegraphics[width = 7.6cm]{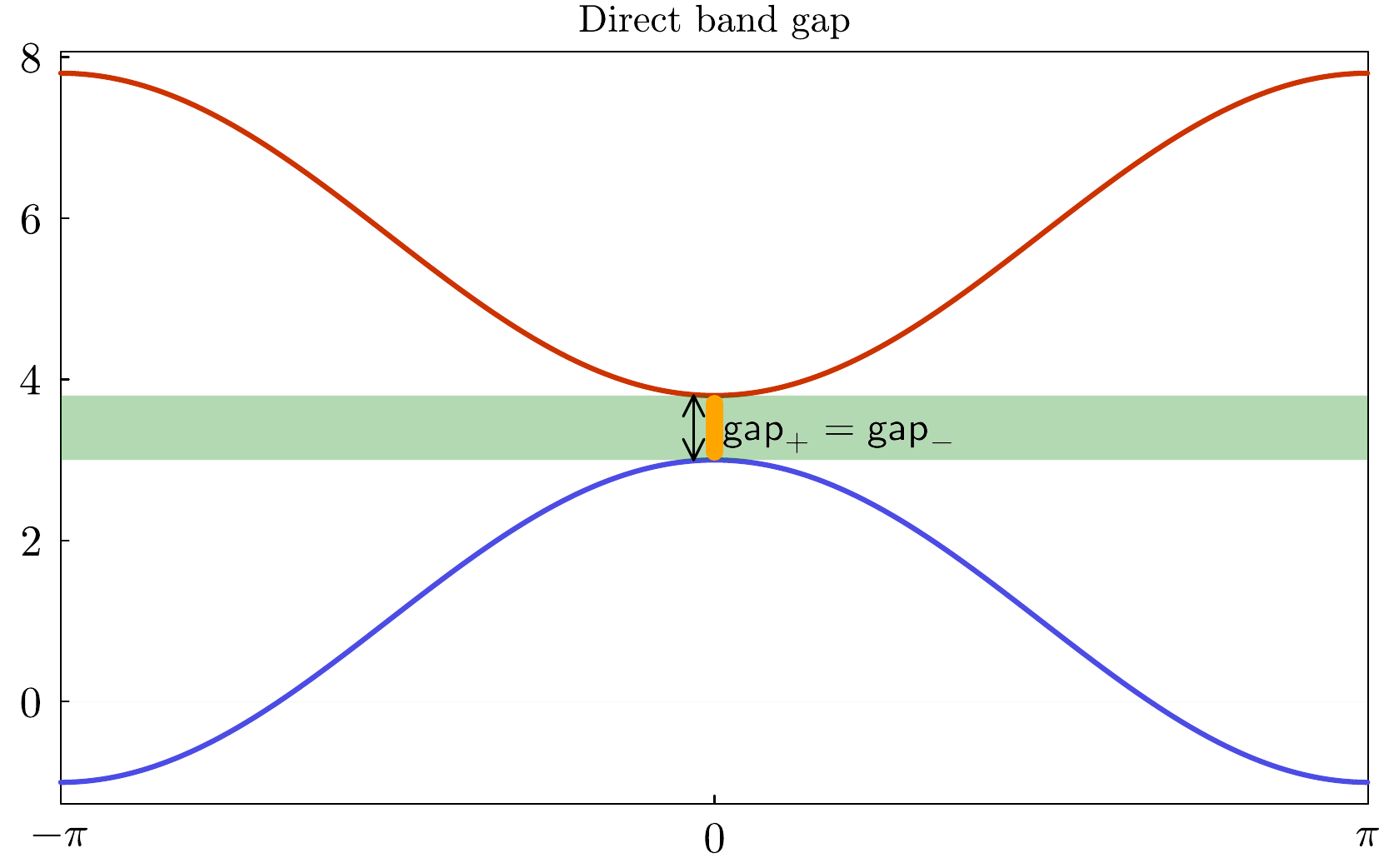}
    }
    \subfigure{
    \includegraphics[width = 7.6cm]{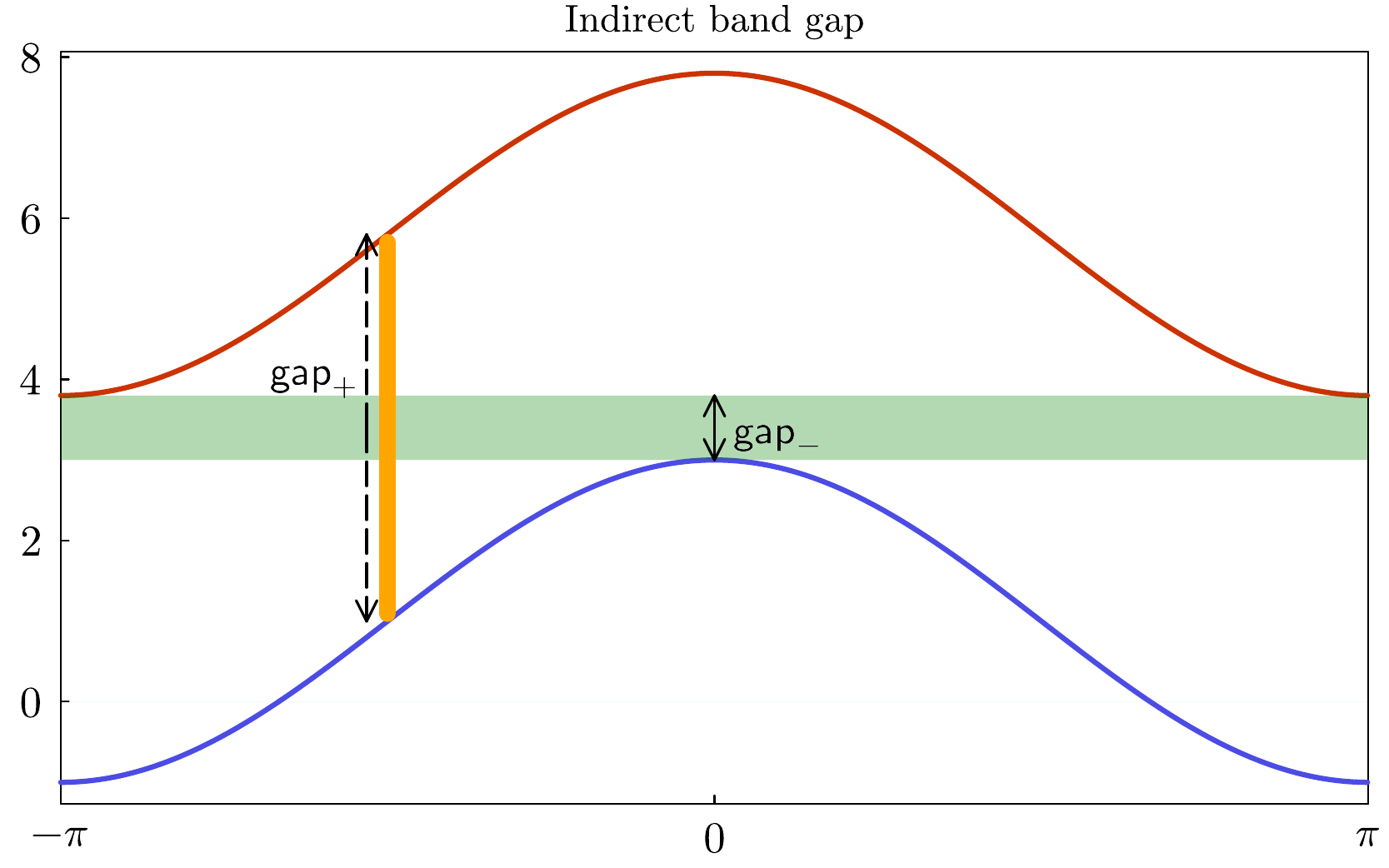}
    }
    \caption{Schematic of the direct and indirect band gaps.}
    \label{fig:gap_type}
\end{figure}
\begin{figure}[H]
    \centering
    \subfigure{
    \includegraphics[width = 7.8cm]{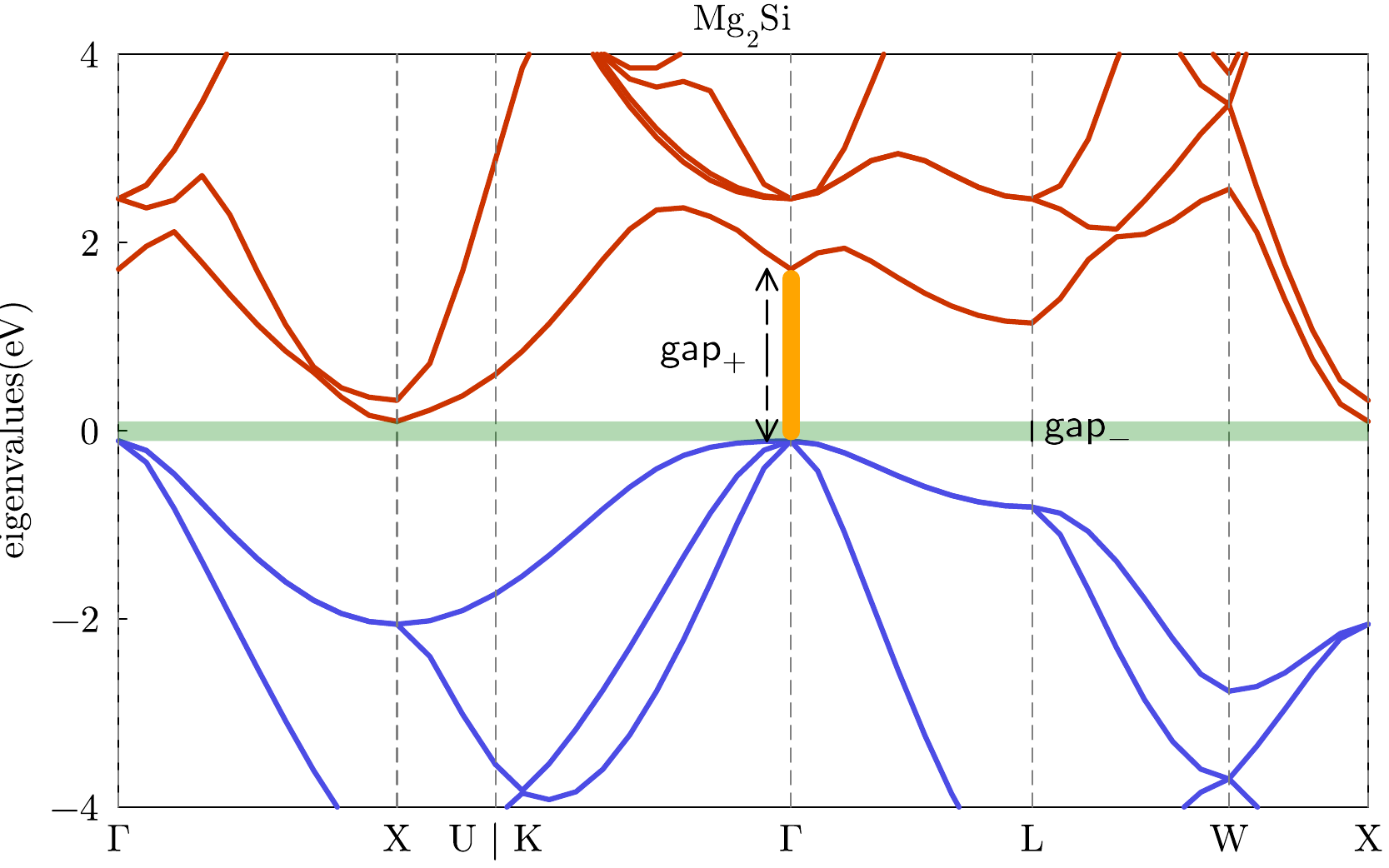}
    }
    \subfigure{
    \includegraphics[width = 7.8cm]{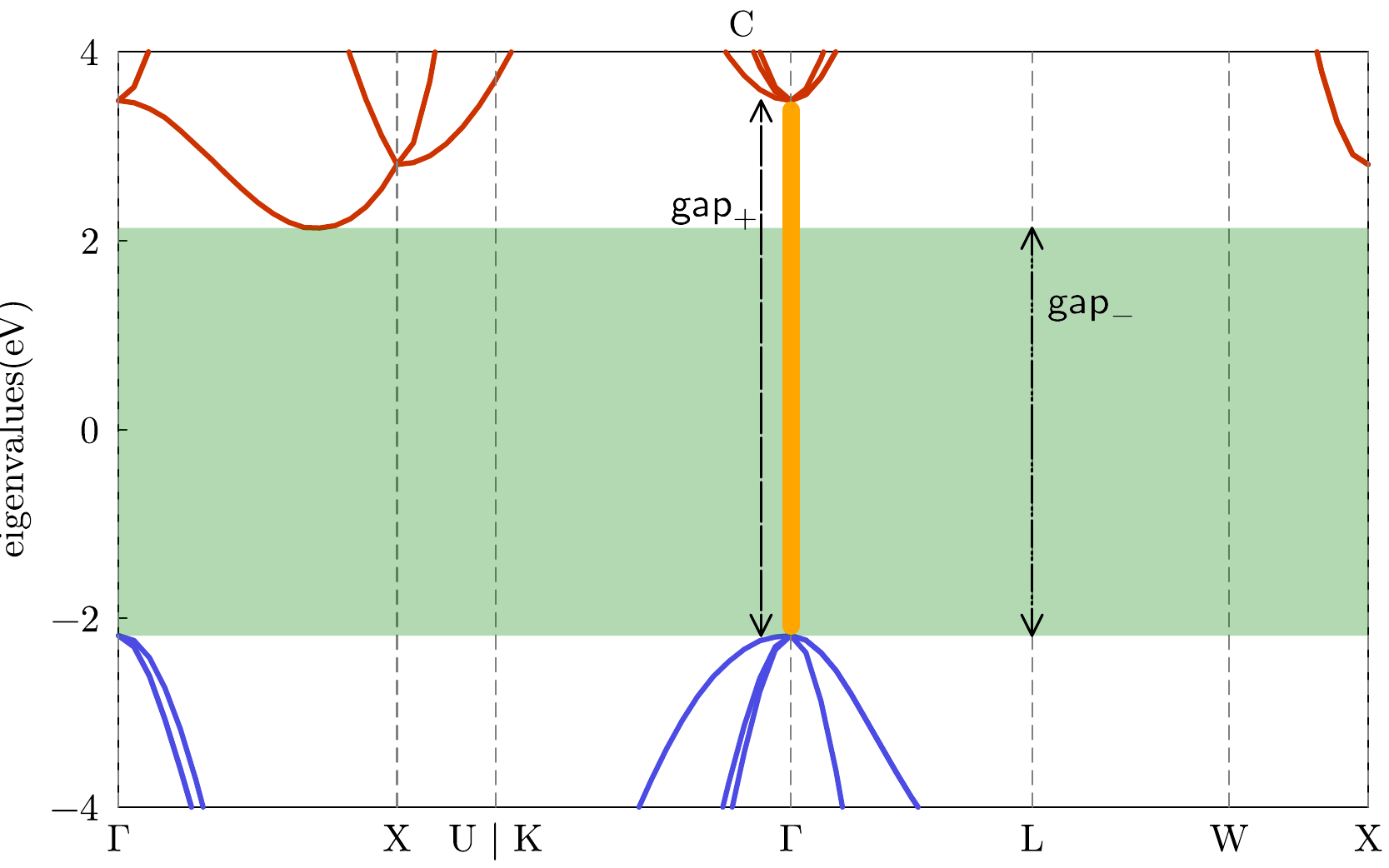}
    }
    \caption{The band structure of cubic $\rm{Mg_2Si}$ and cubic diamond carbon. In diamond, we have $\gap \approx \Gap$, whereas in $\rm{Mg_2Si}$, we observe $\gap \ll \Gap$.}
    \label{fig:gap_bz}
\end{figure}

Notice that when $\gap = \Gap$, the maximal energy of the valence bands and the minimal energy of the conduction bands occurs at the same vector $\bxi \in \Gamma^*$,  and the band gap will be referred to as a \textit{direct band gap}. Conversely, if $\Gap > \gap$, the two extrema occur at different $\bxi$ vectors, resulting in an \textit{indirect band gap}. We exhibit the disparity by two schematic bands in the \Cref{fig:gap_type}, where we use the green strip and the orange bar to denote $\gap$ and $\Gap$, respectively.
In \Cref{fig:gap_bz}, we present the band structure of $\rm{Mg_2S}i$ and $\rm{C}$, with indirect band gaps. The calculation is performed using the $\mathsf{DFTK.jl}$ package \cite{herbst2021dftk}.

In physics, the disparity of direct and indirect band gaps leads to various properties of semiconductors, which are associated with electron-hole pair generation and certain optical effects.
These effects depend on electronic transitions between the top of the valence bands and the bottom of the conduction bands \cite{cox1987electronic, kittel2018introduction, pankove1975optical}. Here, we investigate the role of the indirect band gap in the locality of electronic structure.

\section{Main Results}
\label{sec:results}
In this section, we present our main results on the \textit{nearsightedness} property,
which we sometimes call {\em weak locality} to distinguish it from locality of the mechanical response ({\em strong locality}). 

At zero Fermi-temperature, the Fermi--Dirac distribution is given by $\chi_{(-\infty, \eF)}$ and thus the density matrix reads 
 \begin{align}
     \dmu = \chi_{(-\infty, \eF)}(\Ham(u)), \quad u \in \adm(\Lambda).
 \end{align}
For simplicity, we will denote $\dmr := \rho(\bm 0)$, and use $\left|\dmr_{\ell k}\right|$ to refer to the Frobenius norm of each block for any $\ell, k \in \Lambda$.

We begin by considering the homogeneous lattice and present an improved estimate for the locality of the density matrix, explicitly capturing its linear dependence on $\Gap$: 

\begin{theorem}[Nearsightedness]
\label{thm:nearsight}
    Suppose that $\Lambda$ has the form of \eqref{eq:lattice} and $\Hr $ satisfies assumptions $\asTB$ and $\asGap$. Then, there exist constants $C_1, \eta_+ > 0$ such that 
    \begin{align}
    \label{eq:thm_nearineq}
        \left|\dmr_{\ell k}\right| \leq \frac{C_1}{\mathsf{gap}_+} e^{-{\etaP}r_{\ell k }}, \quad \forall \ell, k \in \Lambda,
    \end{align}  
    where $\etaP := c_1 \frac{\gamma_0}{h_0} \min\left\{h_0,\gamma_0^{d} \, \mathsf{gap}_+\right\}$ and $c_1, C_1 > 0$ only depend on $d, \m, M$ and $\nob$.
\end{theorem}

\begin{remark}
    As we will see from the proof, one may conclude (\ref{eq:thm_nearineq}) but with the prefactor $\frac{C_1}{\mathrm{gap}_+}$ replaced with a constant multiple of $\big( \int_{\Gamma^\star} \big[ \ep_{N_0+1}(\bm \xi) - \ep_{N_0}(\bm \xi)\big]^{-2} \mathrm{d}\bm\xi\big)^{1/2}$. 
\end{remark}

\begin{remark}[Existing results]
\label{rem:CT}
We briefly note here that (\ref{eq:thm_nearineq}) but with $\mathsf{gap}_+$ replaced with $\mathsf{gap}_-$ follows from a contour integral representation of the density matrix, together with Combes--Thomas type estimates on the resolvent \cite{benzi2013decay,  chen2016qm, weinan2010electronic, goedecker1998decay, ortner2020locality}. 
That is, we let $\mathscr C$ be a simple, closed, positively oriented contour encircling $\sigma(\Ham^{\mathrm{ref}}) \cap (-\infty,\ep_{\mathrm{F}})$ (depicted in Figure~\ref{fig:contour}), write
\begin{align}
    \label{eq:dm_real}
    \dmr
    = \oint_{\cc} \chi_{(-\infty, \eF)}(z)\big(z - \Hr\big)^{-1}\frac{\mathrm{d}z}{2\pi i}
    = \oint_{\cc} \big(z - \Hr\big)^{-1}\frac{\mathrm{d}z}{2\pi i},
\end{align}
and use the following Combes--Thomas estimate:  
for $\mathfrak d^{\rm ref} \coloneqq \mathrm{dist}\big( z, \sigma(\Ham^{\mathrm{ref}}) \big) > 0$, we have
\begin{align}
    \left|\big(z - \Ham^{\mathrm{ref}}\big)^{-1}_{\ell k}\right|
    \leq \frac
        {2}
        {\mathfrak{d}^{\rm ref}}
    e^{-\gamma_{\mathrm{CT}}(\mathfrak{d}^{\rm ref}) r_{\ell k}}
    \label{eq:CombesThomas}
\end{align}
where 
$\gamma_{\mathrm{CT}}(\mathfrak{d}^{\rm ref}) = c_0 \frac{\gamma_0}{h_0} \min\{ h_0, \gamma_0^d \, \mathfrak{d}^{\rm ref}\}$
with 
$c_0 = c_0(d,\mathfrak{m})$ 
depending only on the geometry. 
\end{remark}

\begin{figure}
    \centering
    \includegraphics[width = 13 cm]{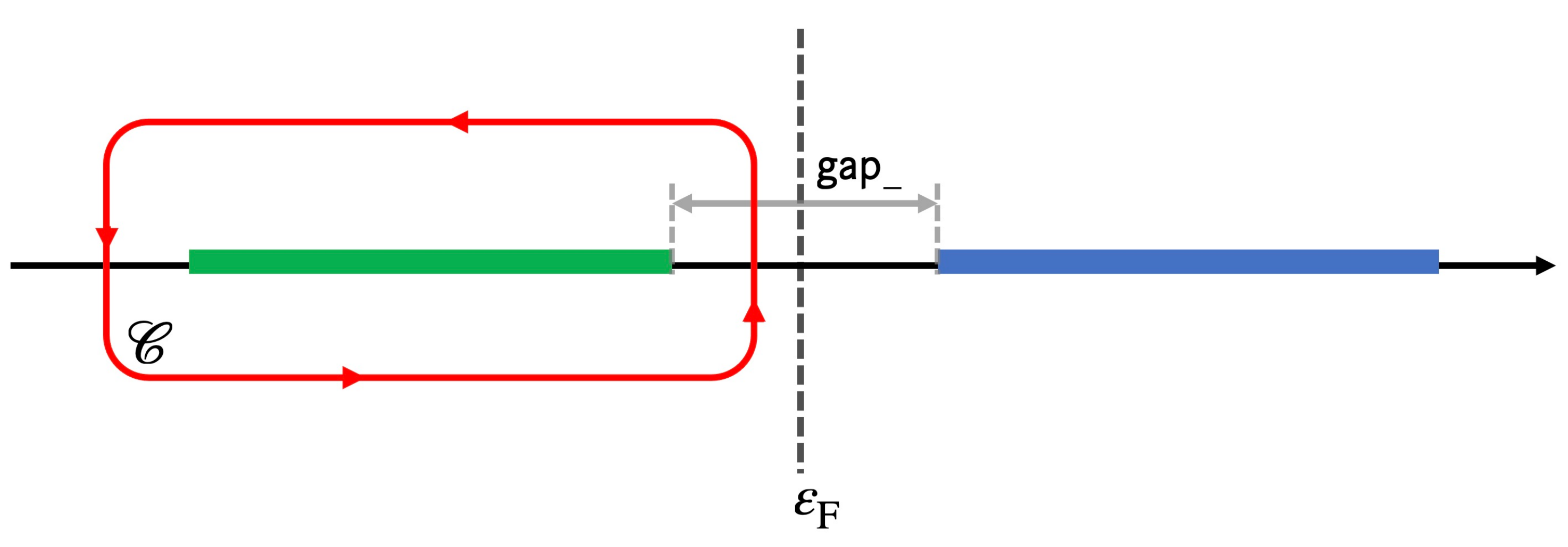}
    \caption{A schematic plot of the contour $\cc$ encircling the valence bands denoted by the green strip, and the conduction bands denoted in blue.}
    \label{fig:contour}
\end{figure}

\begin{proof}[Sketch of the proof of Theorem~\ref{thm:nearsight}] 
As in (\ref{eq:dm_real}), we consider a contour $\mathscr C$ separating the valence bands from the conduction bands, but now use the Bloch decomposition of the resolvent (see \Cref{appd:bloch_trans}), to write
\begin{align}
    \label{eq:dm_k}
       \dmr_{\ell k} &= 
    \oint_{\mathscr C}
    \bigg[
    \dashint_{\Gamma^\star}
        \left( z - \Hxi \right)^{-1} 
        e^{i \bm \xi \cdot (\ell - k)}
    \mathrm{d}\bm{\xi}
    \bigg]_{\ell_0 k_0}
    \frac
        {\mathrm{d}z}
        {2\pi i} \nonumber\\
    &= \dashint_{\Gamma^*}
    \bigg[
        \oint_{\mathscr{C} + \nu(\bxi)} 
       \left(z - \Hxi \right)^{-1}
       \frac
        {\mathrm{d}z}
        {2\pi i} 
    \bigg]_{\ell_0 k_0} e^{i\bxi\cdot (\ell - k)} \dd\bxi 
       \eqqcolon \dashint_{\Gamma^*}\big[\widehat{\rho}(\bxi)\big]_{\ell_0k_0}\, e^{i \bxi\cdot (\ell - k) }\mathrm{d}\bxi,
\end{align}
where $\ell_0, k_0 \in \Lambda_0$ satisfy $\ell - \ell_0, k - k_0 \in \A\Z^d$ 
and $\nu: \Gamma^* \rightarrow \C$ is such that $\mathscr C + \nu(\bm \xi)$ encircles $\{\ep_{n}(\bm\xi)\}_{n\leq N_{0}}$ and avoids $\{\ep_{n}(\bm\xi)\}_{n\geq N_{0}+1}$. 

We conclude \eqref{eq:thm_nearineq} by showing that $\nu$ and thus $\widehat{\rho}$ extend to analytic functions on a strip of width proportional to $\Gap$, and then apply a suitable Paley–Wiener estimate.
Full details are provided in \Cref{proof:thm_nearsight}.
\end{proof}

We now establish the stability of the nearsightedness estimates under lattice perturbations. Specifically, we extend our results to inhomogeneous systems characterized by finite-energy displacements. Since those perturbations must have finite energy-norm, they are necessarily localized, but with relatively slow (e.g., algebraic) decay~\cite{chen2019geometry}.

\begin{theorem}[Local perturbations]
\label{thm:prtb}
    Suppose that $u \in \adm(\Lambda)$ and that $\Ham(u) $ satisfies assumption $\asTB$ ~with $\eF \notin \sigma(\Ham(u))$. Then, there exist constants $C_1, C_{\ell k}(u), \etaP, \etaM > 0$ such that
\begin{align}
    \label{eq:FH_l2_pertb}
        \big|\dmu_{\ell k}\big|  
        \leq
        {\frac{C_1}{\mathsf{gap}_+}} 
        e^{-\etaP r_{\ell k}}
        + C_{\ell k}(u) e^{-\etaM r_{\ell k}},
        \quad 
        \forall \ell, k \in \Lambda,
\end{align}
where $\etaP := c_1 \frac{\gamma_0}{h_0} \min\left\{h_0,\gamma_0^{d} \, \mathsf{gap}_+\right\}$ and $\etaM \coloneqq c_2 \frac{\gamma_0}{h_0}\min\{h_0, \gamma_0^d\gap\}$. Here, the constants $c_1, c_2, C_1 > 0$ depend only on $d, \m, M$ and $\nob$, and $C_{\ell k}(u)$ depends on the norm $\|Du\|_{\ell^2_{\Upsilon}}$ and satisfies $C_{\ell k}(u) \to 0$ as $|\ell| + |k| - r_{\ell k} \to \infty$. 
\end{theorem} 

\begin{remark}
The perturbation induces a site-dependent constant $C_{\ell k}(u)$, which satisfies
\begin{align*}
	C_{\ell k}(u)
	    \lesssim
	        \|Du\|_{\ell^2_{\Upsilon}(\Lambda\backslash B_{R})}
	        +
	        \|Du\|_{\ell^2_{\Upsilon}(B_{R})}
	        e^{\etaM R}e^{-\etaM(|\ell| + |k| - r_{\ell k})}
\end{align*}
for some sufficiently large $R >0$. Consequently, 
in the asymptotic regime where $|\ell| + |k| - r_{\ell k} \to \infty$, the constant $C_{\ell k}(u)$ becomes asymptotically independent of the lattice sites. Moreover, in the limit of vanishing perturbation (i.e., $\|Du\|_{\ell^2_{\Upsilon}(\Lambda)} \to 0$), the above estimate converges to the homogeneous bound stated in \Cref{thm:nearsight}. 
\end{remark}

\begin{proof}[Sketch of the proof] 
We split $\left|\dmu_{\ell k}\right|$ into two parts by the triangle inequality
\begin{align*}
    \left|\dmu_{\ell k}\right| \leq \left|\dmr_{\ell k}\right| + \left|\dmu_{\ell k} - \dmr_{\ell k}\right|,
\end{align*}
where we have 
\begin{align}
\label{eq:dm_diff_u}
    &\left|\dmu_{\ell k} - \dmr_{\ell k}\right| = \left| \oint_{\cc} \left[\big(z - \Ham(u)\big)^{-1}\big(\Ham(u) - \Hr\big)\big(z - \Hr\big)^{-1}\right]_{\ell k}\frac{\dd z}{2\pi i } \right|.
\end{align}
Then apply the Combes--Thomas estimate \eqref{eq:CombesThomas} and its generalization in \eqref{eq:dm_diff_u} to conclude.
For full details, see \Cref{proof:prtb}.
\end{proof}

\begin{remark}
\label{rmk: maxNorm_prtb}
Analogously, we can also consider small global displacements $u$ in the sense that the max-norm,
\[
\|D u\|_{\ell_{\infty}}:=\sup _{\ell \in \Lambda} \sup _{\rho \in \Lambda-\ell} \frac{\left|D_\rho u(\ell)\right|}{|\rho|},
\] 
is sufficiently small ~\cite{chen2016qm, ortner2020locality}. The additional term in \eqref{eq:FH_l2_pertb} can be replaced by a site-independent estimate $C(u) e^{-\etaM r_{\ell k}}$ for some constant $C(u)>0$ depending on $\|Du\|_{\ell_{\infty}}$.
\end{remark}

\medskip
As noted previously, there are two types of locality: weak and strong. While the nearsightedness reflecting weak locality of the density matrix is well understood, it is generally insufficient for characterizing the locality of interatomic interactions. This is important in the construction of interatomic potential models or multi-scale schemes such as QM/MM coupling methods \cite{chen2019adaptive, chen2016qm, chen2017qm, chen2022qm, csanyi2005multiscale}, where selecting an appropriate interaction cut-off radius is essential, for example. To justify those schemes, the strong locality  condition is required. 

Following ~\cite{chen2016qm,ortner2020locality}, one can derive a $\gap$ dependent estimate for the derivative of the density matrix, that is, for $u \in \adm(\Lambda)$,  $1 \le j \le d$, we have
\begin{align}
    \label{eq:strong}
    \left|
        \frac{\partial[\dmu]_{\ell \ell}}{\partial [u(k)]_{j}}     
        	\bigg|_{u = \mathbf{0}}
    \right| 
        \lesssim 
            e^{-\etaM r_{\ell k}},
            \quad
            \forall 
            \ell , k \in \Lambda, 
\end{align}          
where $\etaM$ is introduced in \eqref{eq:FH_l2_pertb}.

A natural question is whether the strong locality can also be improved with respect to $\Gap$.
Unfortunately, the answer is negative. Here, we provide a heuristic argument explaining why no stronger result can be expected. 
First, we note that
\begin{align}
\frac{\partial\rho_{\ell\ell}}{\partial [\bm r_k]_{j}} 
&\coloneqq
\frac{\partial\big[\dmu\big]_{\ell\ell}}{\partial [u(k)]_{j}}\Bigg|_{u=\mathbf{0}} 
    =
    \oint_{\cc} \left[\big(z - \Ham^{\rm ref}\big)^{-1} \, \frac{\partial \Ham(u)}{\partial [u(k)]_j}\bigg|_{u = \mathbf{0}}\, \big(z - \Ham^{\rm ref}\big)^{-1} \right]_{\ell\ell}
    \frac{\dd z}{2\pi i } \nonumber\\
    &= \sum_{mn}
    \frac
        {\partial \Ham_{mn}}
        {\partial [\bm r_k]_j}
        :
    \fint_{(\Gamma^*)^2} 
    \left[ \oint_{\cc}
    \big(z - \Ham_{\bxi}\big)^{-1}_{\ell_0 m_0} 
    \big(z - \Ham_{\bzeta}\big)^{-1}_{n_0\ell_0}
    \frac{\dd z}{2\pi i }
    \right]
    e^{i \bxi \cdot (\ell - m)}
    e^{i \bzeta \cdot (n - \ell) }
    \dd\bxi \dd\bzeta. 
    \label{eq:drho-drk}
\end{align}
Therefore, as in the proof of Theorem~\ref{thm:nearsight}, one may obtain the strong locality estimate from the locality of the Hamiltonian, as in \asTB, and the analyticity of the functions
\begin{align}
 \label{eq:rhohat2}
	(\bxi,\bzeta) \mapsto \oint_{\cc}
    \big(z - \Ham_{\bxi}\big)^{-1}_{\ell_0 m_0} 
    \big(z - \Ham_{\bzeta}\big)^{-1}_{n_0\ell_0}
    \frac{\dd z}{2\pi i }.
\end{align}
Recall that, for the analogous function $\widehat{\rho}(\bm \xi)$ \eqref{eq:dm_k} used in the nearsightedness estimate, one may extend the region of analyticity of $\widehat{\rho}$ to a strip of width proportional to $\Gap$ by replacing $\mathscr C$ with a contour depending on $\bxi$. 

However, this is not the case for the functions \eqref{eq:rhohat2}. Since the contour $\mathscr C = \mathscr C(\bxi,\bzeta)$ must simultaneously separate $\bigcup_{n\leq N_0}\big\{ \ep_{n}(\bxi)\big\}$ from $\bigcup_{n\geq N_0+1}\big\{ \ep_{n}(\bxi) \big\}$ and $\bigcup_{n\leq N_0}\big\{ \ep_{n}(\bzeta)\big\}$ from $\bigcup_{n\geq N_0+1}\big\{ \ep_{n}(\bzeta) \big\}$, the region of analyticity of \eqref{eq:rhohat2} becomes a subset of 
\begin{align}
	\label{eq: deriv_region}
    \left\{ 
    ({\bm \eta}_1, {\bm \eta}_2) \in (\mathbb R^d)^2 \colon  
    \ep_{N_0}(\bxi + i{\bm \eta}_1)
    \not= \ep_{N_0 + 1}(\bzeta + i{\bm \eta}_2), 
    ~\forall (\bxi, \bzeta) \in (\Gamma^*)^2
    \right\},
\end{align}
and this region depends on $\gap$. Additional empirical evidence from our numerical experiments (see \Cref{fig:strong_loc} and \Cref{fig:Mg2Si_strong} in Section \ref{sec:numerical}),
also strongly supports that the strong locality does indeed depend on $\gap$ rather than $\Gap$.

Nevertheless, we show that the dependence of the strong locality estimate on the spectral gap
can be further improved in the small-gap regime,
provided that the energy bands near the Fermi level satisfy additional regularity conditions.
More precisely, under suitable analyticity assumptions, the decay rate improves from a linear
dependence on $\gap$ to a square-root scaling $\sqrt{\gap}$.

Specifically, we impose the following regularity assumption on the relevant energy bands:
\smallskip
\begin{flushleft}
	\asReg ~Suppose that
	the bands $\ep_{N_0}$ and $\ep_{N_0 + 1}$ admit an analytic continuation to the domain
	\[
	\C_{\reg}^d \coloneqq \{\bxi + i\bzeta \in \C^d: |\bzeta|< \reg\} 
	\]
	for some
	constant $ \reg > 0$.
\end{flushleft}
\smallskip
Under this assumption, we obtain the following improved bound:

\begin{theorem}
\label{thm: sharp_estimate}
	Suppose that $u \in \adm(\Lambda)$ and that $\Ham^{\rm ref}, \Ham(u)$ satisfies assumptions $\asTB$ and $\asReg$, with $\eF \notin \sigma(\Ham(u))$. Define 
	\begin{align*}
		m_{\infty} \coloneqq \max_{n \in \{N_0, N_0 + 1\} }\sup_{{\bm \omega} \in 
			\C_{\reg/2}^d \cap \C_{\gamma_0/2}^d}
		  	\|\nabla^2\ep_{n}({\bm \omega})\|_{\rm F}.
	\end{align*}
	Then,
	\begin{itemize}
		\item[(i)](Nearsightedness) There exist constants $C_2,\widetilde{\eta}_+>0$ such that
		\begin{align}
	        \label{eq:shaper_nearsight}
	            \left|\dmr_{\ell k}\right| 
	            \leq 
	            \frac{C_2}{\Gap} e^{-\widetilde{\eta}_+ r_{\ell k}},
	            \quad
	            \forall \ell, k \in \Lambda;
	    \end{align}
		\item[(ii)] (Strong locality) Moreover, there exist constants $C_3,\widetilde{\eta}_->0$ such that, for any $1\le j\le d$, 
		\begin{align}
	        \label{eq:shaper_strong_loc}
	            \left|
	                \frac{\partial[\dmu]_{\ell \ell}}{\partial [u(k)]_{j}}
	                 {\bigg|_{u = {\bf 0}}}
	            \right| 
	            \leq 
	            \frac{C_3}{\gap^2} e^{-\widetilde{\eta}_- r_{\ell k}},
	            \quad
	            \forall \ell, k \in \Lambda,
	    \end{align}
	\end{itemize}
where $\widetilde{\eta}_+ \coloneqq c_3 \min\left\{\gamma_0, \reg, \,\sqrt{\Gap/m_{\infty}}\right\}$ and $\widetilde{\eta}_- \coloneqq c_4 \min\left\{\gamma_0, \reg, \,\sqrt{\gap/m_{\infty}}\right\}$ for constants $c_3, c_4 > 0$.  Here, $C_2 > 0$ depends only on $d, \m, M$ and $\nob$, and $C_3 > 0$ depends on $h_0, \gamma_0, d, \m, M$ and $\nob$.
\end{theorem}

\begin{proof}[Sketch of the proof] 
To characterize a suitable analytic subregion of \eqref{eq: deriv_region}, it suffices to analyze the behavior of the real parts of the relevant energy bands. By the analyticity assumption $\asReg$, the band functions $\ep_n$ admit a uniform bound on their Hessians over $\C_{\reg/2}^d$, which ensures that $m_{\infty} < \infty$.
Consequently, for any $\bxi + i\bzeta \in \C_{\reg/2}^d$ and $n \in \{N_0, N_0 + 1\}$, we obtain 
\begin{align}
	\label{eq:square_pertb}
    \re\ep_{n}(\bxi + i \bzeta)
	\in 
	B_{m_{\infty}|\bzeta|^2}\!\left(\ep_{n}(\bxi)\right),
\end{align}
where the linear term $i \nabla \ep_n(\bxi)\cdot \bzeta$ is purely imaginary and therefore does not contribute to the real parts.

Combining this quadratic control of the band separation with the Paley--Wiener estimate, we conclude that the width of the analytic region for \eqref{eq:rhohat2} scales like $\sqrt{\gap/m_{\infty}}$ in the small-gap regime, and the desired exponential decay estimate follows. The same argument applies to $\rho^{\rm ref}_{\ell k}$
 yielding an analogous exponential decay rate scaling with $\sqrt{\Gap/m_{\infty}}$. 
For full details, see \Cref{proof: sqrt_pf}.
\end{proof}

\begin{remark}
\label{rem:sqrt-nearsightedness}
    The assumption $\asReg$ is generally quite strong but they do apply to certain cases, such as those shown in \Cref{fig:1d_bands_type11}(a), (b) and \Cref{fig:1d_bands_type2}. 
    This condition may fail when band crossings occur in $d\geq 2$~\cite{Nenciu1991, panati2007triviality}. 
    A typical case are \textit{Dirac} cones: 
    \[
    \Ham_{\bxi} = 
    \begin{pmatrix}
        0 & \xi_1 - i \xi_2 \\
        \xi_1 + i \xi_2 & 0
    \end{pmatrix},
    \quad \bxi = (\xi_1, \xi_2) \in \Gamma^*,
    \]
    where the energy bands $\ep_\pm(\bxi) = \pm \sqrt{\xi_1^2 + \xi_2^2}$ are not even differentiable at $\bxi = {\bf 0}$.
    The analysis of more complex systems will be left for our future work.
\end{remark}

\begin{remark}[General observables] 
\label{rmk: general_ob}
More generally, we suppose $O: \R \to \R$ is a function that extends analytically to an open neighbourhood of the spectrum. Then, one can define the corresponding local quantities for all $\ell, k \in \Lambda$ by
\begin{align*}
    O(u)_{\ell k} :=  \oint_{\cc}O(z) \big(z - \Ham(u)\big)_{\ell k}^{-1}\frac{\mathrm{d}z}{2\pi i},
\end{align*}
and the observable corresponding to $O$ is given by $\sum_{\ell}O(u)_{\ell\ell} := \tr\left(O\big(\Ham(u)\big)\right)$. The derivatives of $O_{\ell\ell}$ related to mechanical response are given by
\begin{align*}
    \frac{\partial O(u)_{\ell\ell} }{\partial [u(k)]_j}\bigg|_{u = {\bf 0}} = 
    \oint_{\cc} O(z) \left[(z - \Ham^{\rm ref})^{-1} \frac{\partial \Ham(u)}{\partial [u(k)]_j}\bigg|_{u = {\bf 0}} (z - \Ham^{\rm ref})^{-1}  \right]_{\ell\ell} \frac{\dd z}{2\pi i}.
\end{align*}
All of our conclusions can be extended to this more general case, which also improve the results of \cite{chen2016qm, ortner2020locality}.
\end{remark}

\section{Numerical Experiments}
\label{sec:numerical}
In this section, we present numerical experiments to support our analytical results. Specifically, we demonstrate the dependence of nearsightedness on $\Gap$ and its stability under perturbations, as well as the strong locality dependence on $\gap$. We also include experiments that validate the potential sharp estimates. To this end, we construct 1D toy models with tunable band structures, allowing us to adjust $\Gap$ and $\gap$ in a controlled manner. In addition, we also illustrate our results on a realistic 3D system, $\rm Mg_2Si$.

\subsection{1D Toy Model}
\label{subsec:toy model}
Let us consider an infinite 1D chain $\Lambda := \Z$. We construct the tight binding Hamiltonian $\Ham(u)$ by assigning each atom with two atomic orbitals. Only nearest-neighbour interactions are included.

We suppose that $\Ham(u)$ is given by $\Ham(u)_{\ell k} = h\big( \bm r_{\ell k}(u) \big)$ with 
\begin{align}
    h\big( \bm r_{\ell k}(u) \big)
    &= \begin{cases}
        \bm c &\text{if } \ell = k \\
        \bm f\big( r_{\ell k}(u) \big)
        &\text{if } \ell \not= k
    \end{cases}
\end{align}
and $f_{ij}( r ) \coloneqq (a_{ij} + b_{ij} r) \eta(r)$ for parameters $\bm a, \bm b, \bm c \in \mathbb R^{2\times 2}_{\rm sym}$ and for some smooth cut off function $\eta$ with $\eta = 1$ on $[0,1]$ and $\eta = 0$ on $[2,\infty)$. One may verify that $\Ham(u)$ satisfies \asTB~with Bloch transform of $\Ham(0)$ given by
\begin{align}
    \Ham_\xi = \bm c + 2 \bm f \cos(\xi), 
    \quad \text{with } \xi \in \Gamma^\star = [-\pi, \pi].
\end{align}
where $\bm f \coloneqq \bm f(1)$. By solving the eigenvalue problem $\Ham_{\xi} u_n(\xi) = \ep_{n}(\xi) u_n(\xi)$,
we obtain two energy bands $\ep_\pm(\xi)$, as well as an analytic expression for $\Gap$,
which enables us to build a chain with varying $\gap$ while keeping $\Gap$ approximately fixed; cf.~the vertical gap in \Cref{fig:gap_type}. 

In the following, we will consider two ``types'' of band gaps: \textit{(i)} first, we consider $c_{12} = c_{21} = 0$ and take the off-site block to be a full matrix. We also consider \textit{(ii)} the case when the on-site block is approximately diagonal ($c_{12} \approx 0$), while the off-site block is diagonal (that is, $f_{12} = f_{21} = 0$) and thus $\ep_{\pm}(\xi)$ resemble cosine-like bands. The following figures illustrate band structures of the two types Hamiltonian under various $\gap$ and $\Gap$ settings.

\textit{(i)} In \Cref{fig:1d_bands_type11} (a) and (b), we fix $\Gap = 1.0$ and let $\gap$ decease from $1.0$ to $0.1$, showing an indirect gap, with the curvature norm $m_{\infty}$ uniformly bounded. 
This means we would expect $\sqrt{\gap}$ dependence in the derivatives of $\rho(u)$, as predicted by Theorem~\ref{thm: sharp_estimate} (c.f.~\Cref{fig:strong_loc}). 
Notably, as $\Gap \to 0$, $m_{\infty}$ diverges, see~\Cref{fig:1d_bands_type11} (c).

\begin{figure}[H]
    \centering
    \subfigure[$\Gap = \gap = 1.0$]{
    \includegraphics[width = 5.0cm]{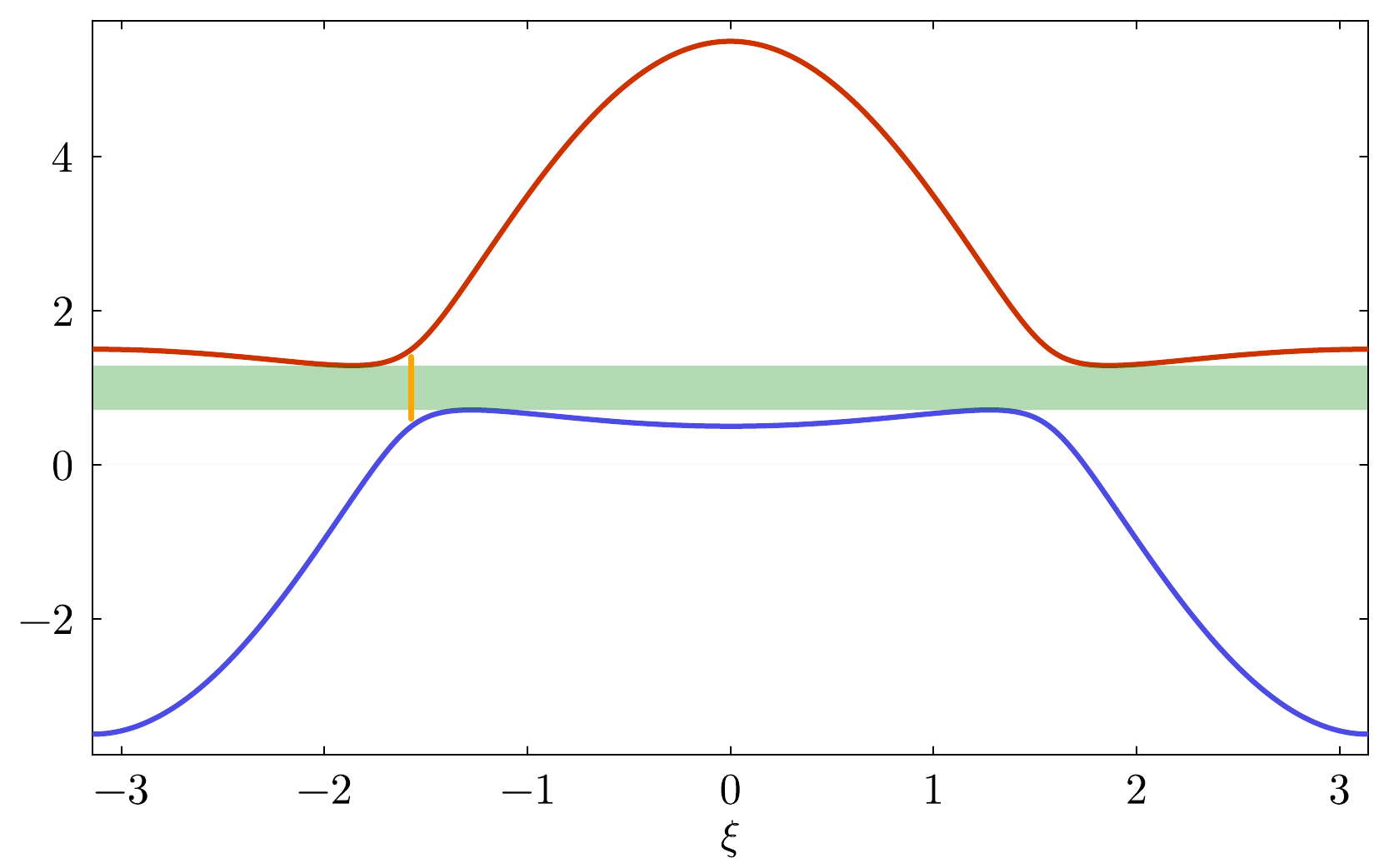}
    }
     \subfigure[$\Gap = 1.0, ~\gap = 0.1$]{
    \includegraphics[width = 5.0cm]{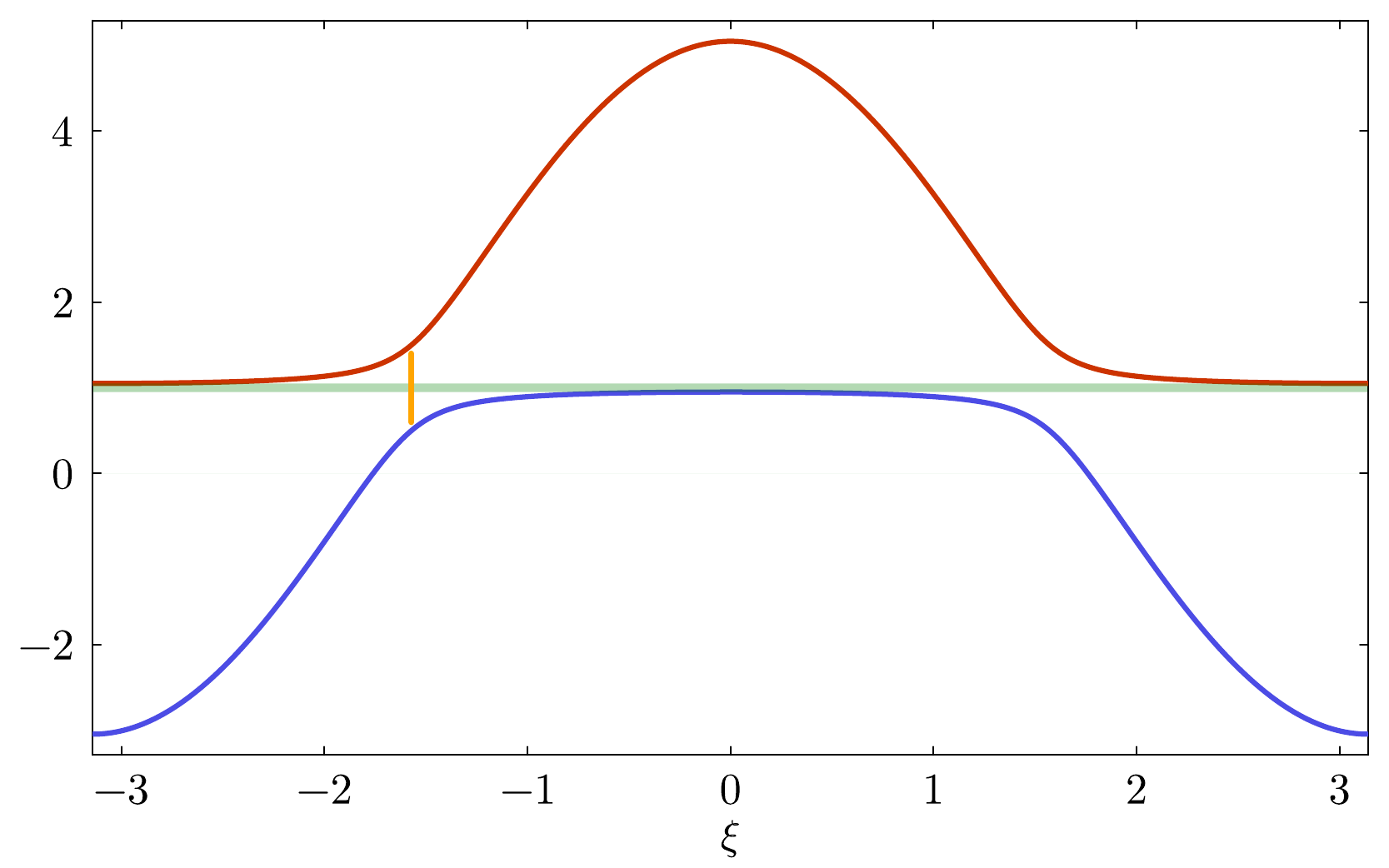}
    }     
     \subfigure[$\Gap =0.25,  ~\gap = 0.01$]{
    \includegraphics[width = 5.0cm]{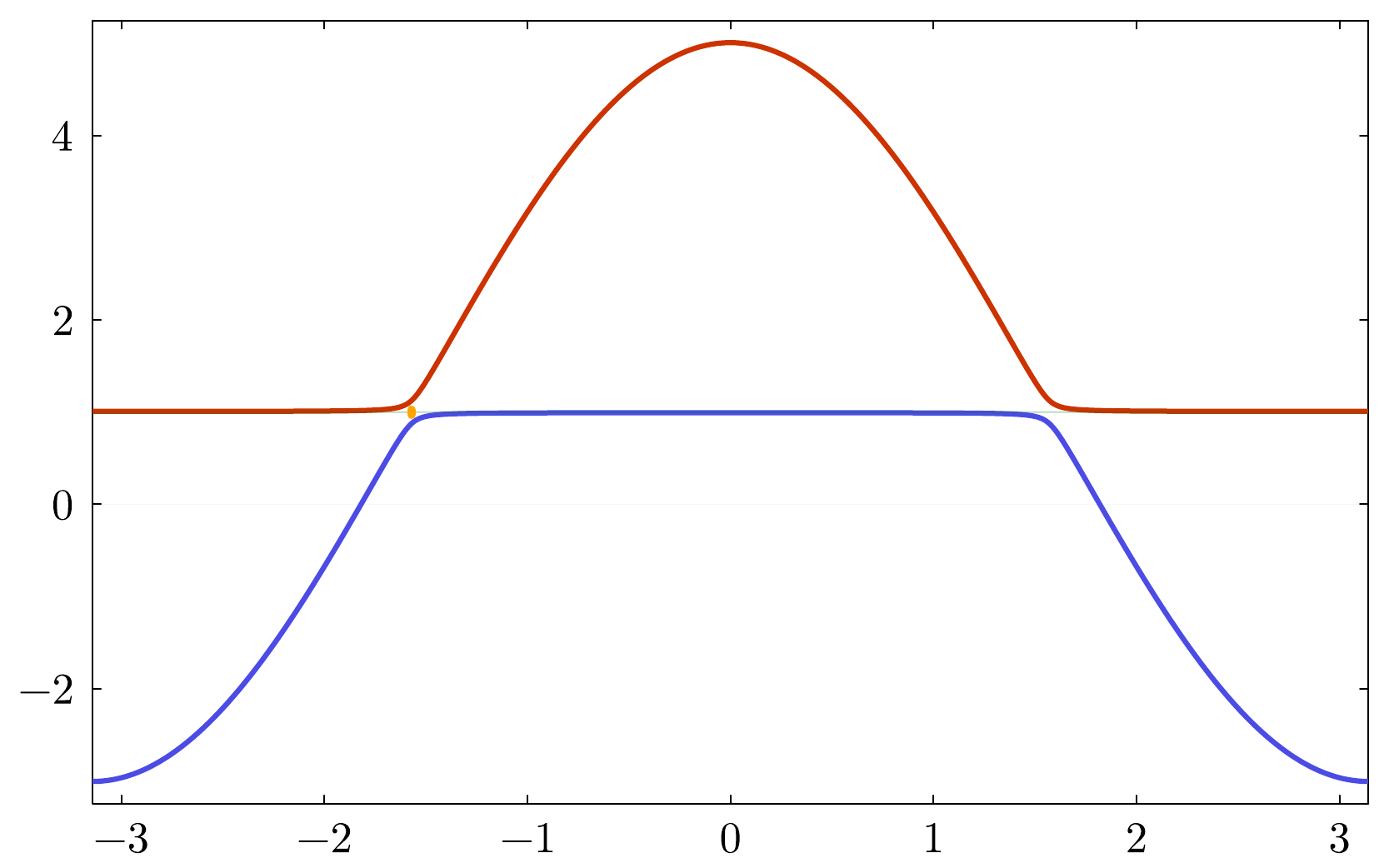}
    }
    \caption{Band structure of type \textit{(i)} model.}
    \label{fig:1d_bands_type11}
\end{figure}

\textit{(ii)} \Cref{fig:1d_bands_type2} displays the cosine-like bands with a direct band gap. As $\Gap \to 0$, $m_{\infty}$ remains uniformly bounded and one expects a $\sqrt{\Gap}$ dependence in the nearsightedness estimate as predicted by  Theorem~\ref{thm: sharp_estimate} (cf.~\Cref{fig: sqrt_Gap}).

\begin{figure}[H]
    \centering
    \subfigure[$\Gap = \gap = 1.0$]{
    \includegraphics[width = 5.0cm]{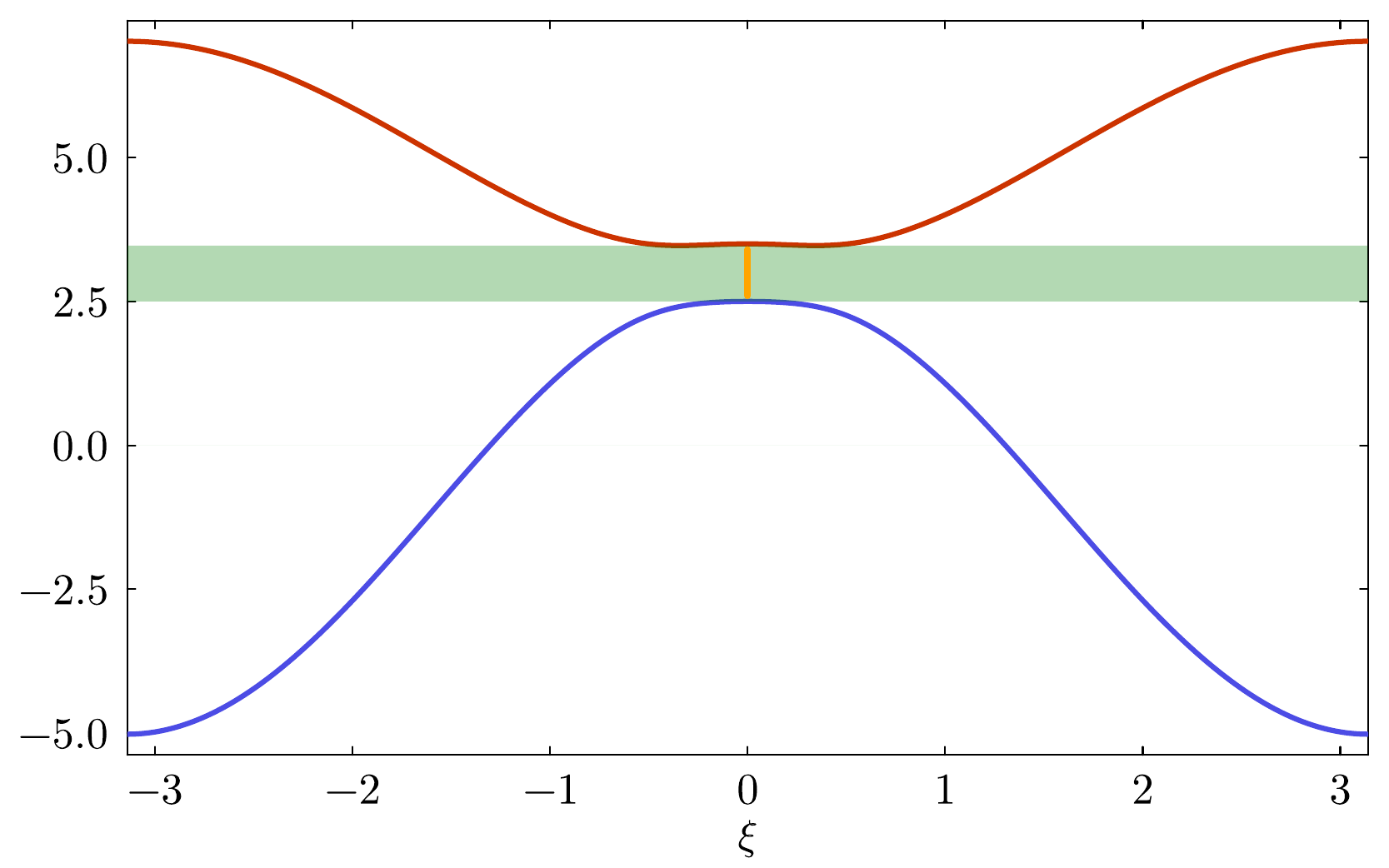}
    }
    \subfigure[$\Gap = \gap = 0$]{
    \includegraphics[width = 5.0cm]{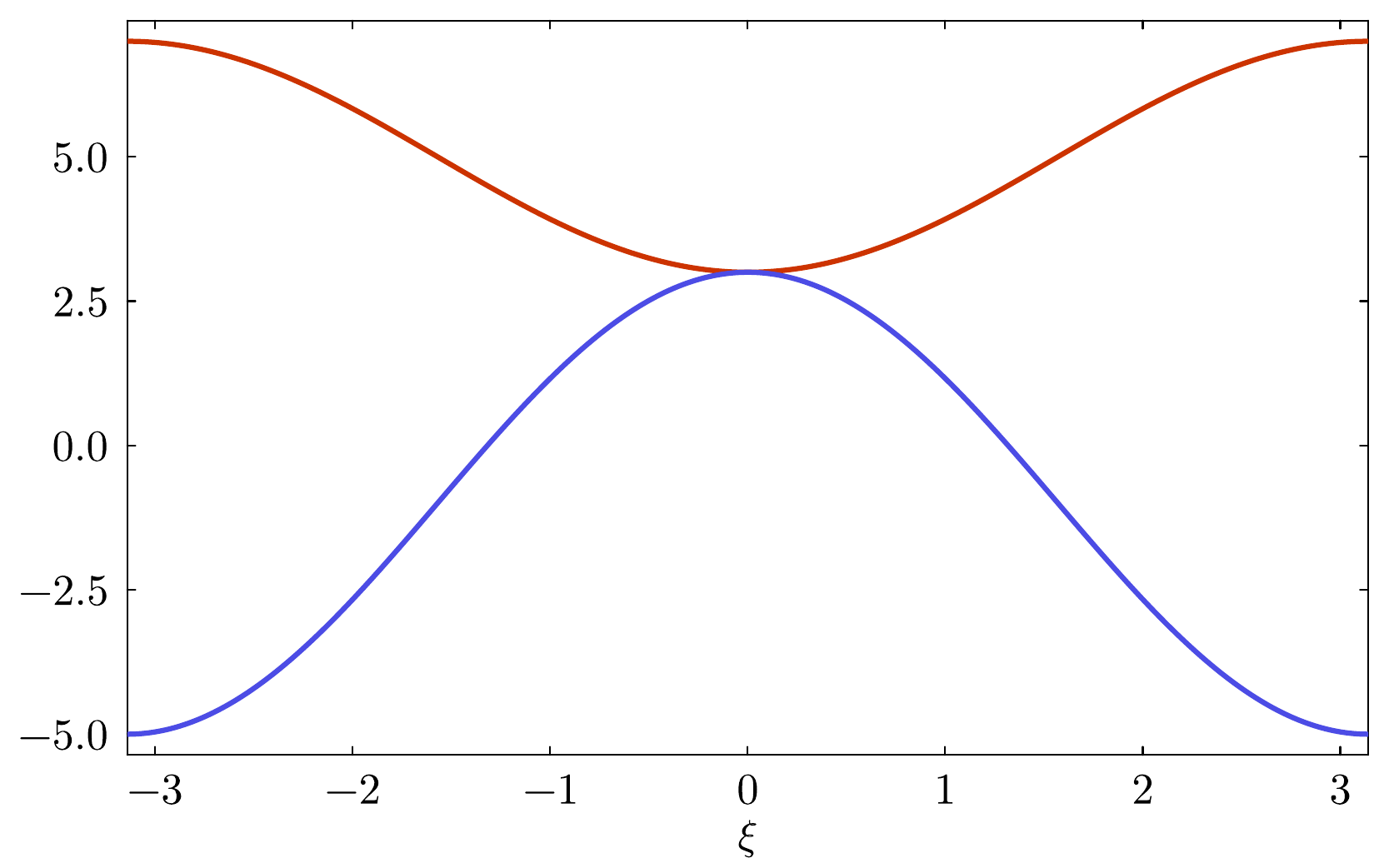}
    }
    \caption{Band structure of type \textit{(ii)} model.}
    \label{fig:1d_bands_type2}
\end{figure}

The specific examples and computational details can be found in \Cref{appd: 1d_chain}.

\subsection{Nearsightedness}
\label{eq: test_nearsihtedness}
We conduct the tests using the type \textit{(i)} model under two settings: one with fixed $\Gap$ and varing $\gap$, and the other with fixed $\gap$ and varing $\Gap$, to validate the dependence on $\Gap$.
In practice, we simulate supercell models including 200 atoms. 

In \Cref{fig:weak_loc}, we compare the decay of the density matrix by fixing $\Gap = 2.0$ and varying $\gap$ across the values $1/2, 1/8, 1/32, 1/128, 1/512$, where the decay curves align almost perfectly. Moreover, we also compare the results by fixing $\gap = 0.01$ and taking $\Gap = 2.0, 1.0, 0.5, 0.25$, which shows a linear dependence on $\Gap$ as displayed in \Cref{tab: weak_loc_slope}.

\begin{figure}[H]
    \centering
    \subfigure[]{
    \includegraphics[width = 7.2cm]{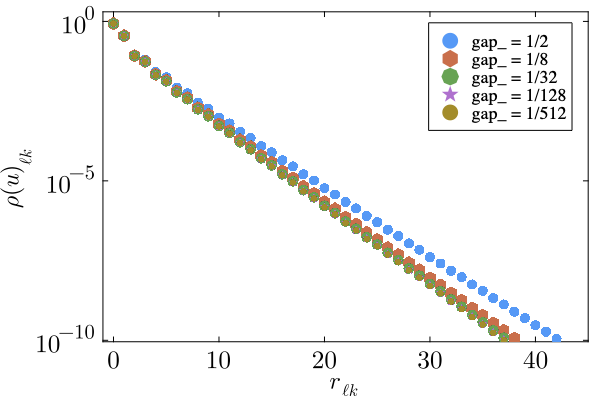}}
    \subfigure[]{
    \includegraphics[width= 7.2cm]{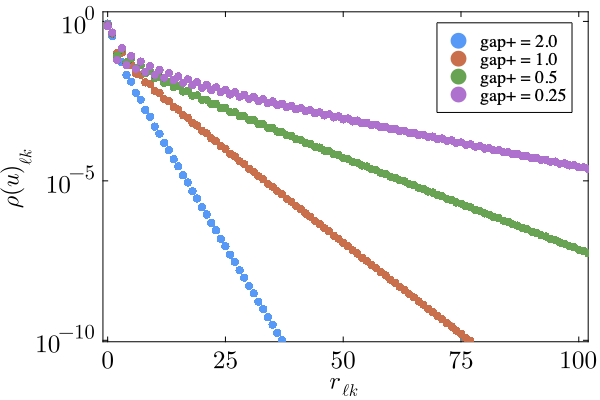}
    }
    \caption{1D chain: nearsightedness in homogeneous system with (a) fixed $\Gap = 2.0$ and (b) $\gap = 0.01$; cf.~\Cref{thm:nearsight}. 
    }
    \label{fig:weak_loc}
\end{figure}

\begin{table}[H]
    \centering
    \begin{tabular}{c|c|c|c|c}
    \hline
    $\Gap$ & 2.0 & 1.0 & 0.5 & 0.25\\ \hline
    slope  & -0.2456 & -0.1194 & -0.0649 & -0.0291 \\ \hline
    \end{tabular}
    \caption{Slopes for the decay curves in Figure~\ref{fig:weak_loc} (b): the absolute values of slopes decay by half as $\Gap$ decreases by half.}
    \label{tab: weak_loc_slope}
\end{table}


In \Cref{fig:weak_prtb_l2} and \Cref{fig:weak_prtb_inf}, we conduct tests with admissible perturbations. Specifically, subfigures (a), (b) and (c) correspond to $\gap = 0.1, 0.01$ and $0.001$, respectively. The perturbations have infinity norm of $\ep = 10^{-6}, 10^{-4}, 10^{-2}$, and $\ell^2_{\Upsilon}$ norm of $\ep = 10^{-5}, 10^{-3}, 10^{-1}$, respectively. 
The turning points of the curves in each figure shift forward gradually as $\gap$ decreases and $\ep$ increases. Additionally, we observe that the influence of $\gap$ becomes increasingly dominant as $\ep$ rises, which aligns perfectly with the analysis in \Cref{thm:prtb}.

\begin{figure}[H]
    \centering
    \subfigure[$\gap = 0.1$]{\includegraphics[width = 5.5cm]{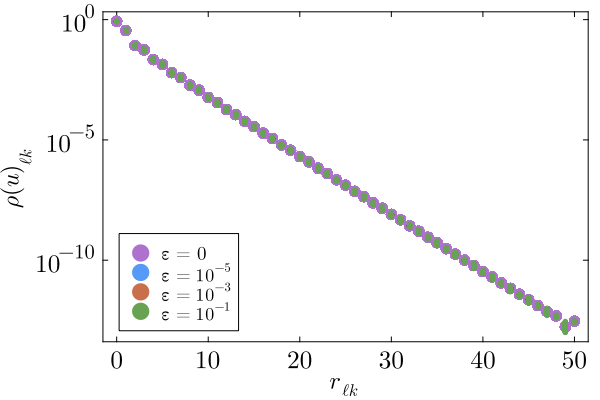}}
    \subfigure[$\gap = 0.01$]{\includegraphics[width = 5.5cm]{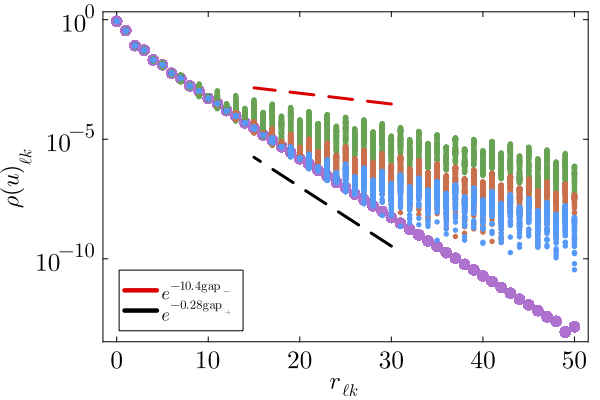}}
    \subfigure[$\gap = 0.001$]{\includegraphics[width = 5.5cm]{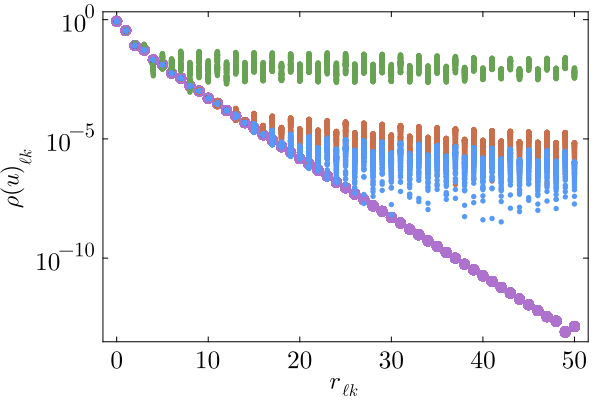}}
    \caption{1D chain: nearsightedness with local perturbations; cf. \Cref{thm:prtb}}
    \label{fig:weak_prtb_l2}
\end{figure}
\begin{figure}[H]
    \centering
    \subfigure[$\gap = 0.1$]{\includegraphics[width = 5.5cm]{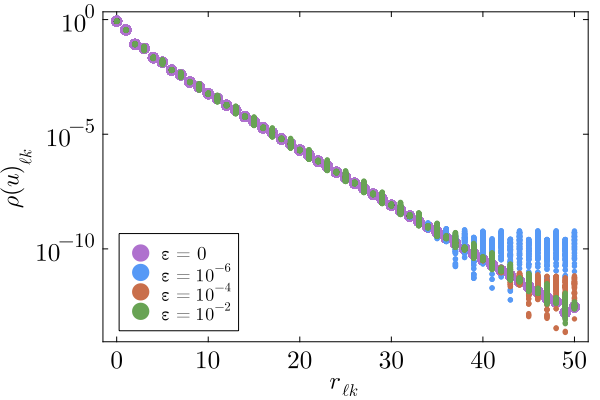}}
    \subfigure[$\gap = 0.01$]{\includegraphics[width = 5.5cm]{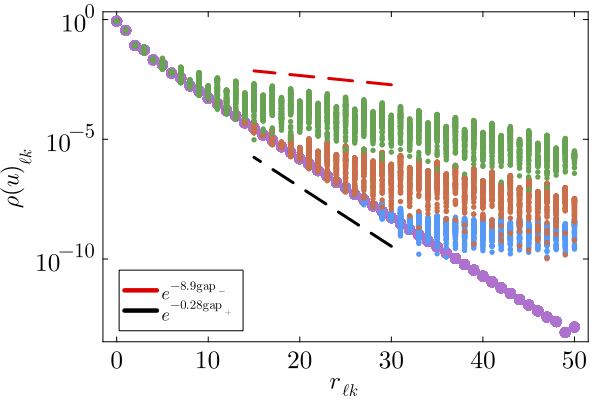}}
    \subfigure[$\gap = 0.001$]{\includegraphics[width = 5.5cm]{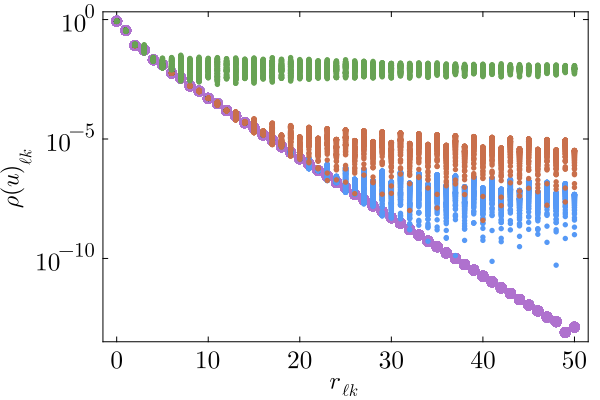}}
    \caption{1D chain: nearsightedness with global perturbations; cf. Remark~\ref{rmk: maxNorm_prtb}}
    \label{fig:weak_prtb_inf}
\end{figure}


In \Cref{fig: sqrt_Gap}, we conduct the tests using the type \textit{(ii)} model, which yields
cosine-like bands for suitable parameter choices; see \Cref{appd: 1d_chain} for details. 
We plot the decay curves of the density matrix for $\Gap(=\gap) = 0.1, 0.01, 0.001$ on a lin-log scale. The $\Gap$-slope relation in \Cref{fig: sqrt_Gap}(b) exhibits a square-root dependence on $\Gap$.

\begin{figure}[H]
    \begin{minipage}[b]{0.45\textwidth}
        \centering
        \includegraphics[width = 7.5cm]{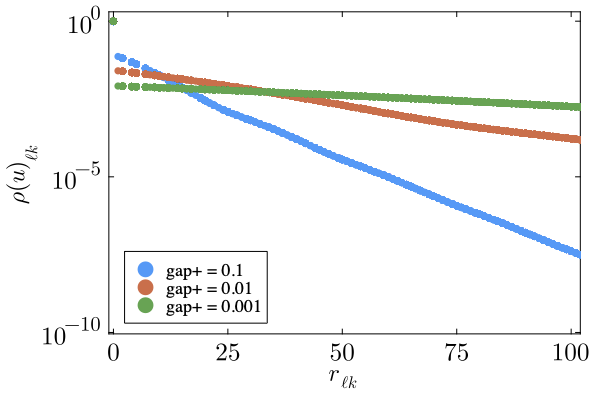}
        \caption*{(a)}
    \end{minipage}
    \begin{minipage}[b]{0.5\textwidth}
    \centering
        \begin{tabular}{c|c|c}
		\hline
		$\Gap (= \gap)$ & slope &  $c\Gap^{\alpha}$      \\ \hline
		0.1  & -0.0585 & -0.0589 \\ \hline
		0.01 & -0.0208 & -0.0205\\ \hline
		0.001 & -0.0071 & -0.0071 \\ \hline
        \end{tabular}
        \vspace{1cm}
        \caption*{(b)}
    \end{minipage}
    \caption{1D chain: the sharp estimate for nearsightedness in homogeneous system; the third column in (b) is the optimal fitting between $\Gap$ and slopes with $c = -0.17$, $\alpha = 0.47$; cf. Theorem~\ref{thm: sharp_estimate}.}
    \label{fig: sqrt_Gap}
\end{figure}

\subsection{Strong locality}
\label{eq: test_strong_locality}
In the following experiments, we focus exclusively on the type \textit{(i)} model, which allows for more controlled tuning of $\gap$ with fixed $\Gap$, enabling clearer observation of the decay behaviour.

In \Cref{fig:strong_loc}, we plot the derivative of the density matrix $\big|\frac{\partial \dmr_{\ell \ell}}{\partial r_m}\big|$ for $\ell = 1$ and $\gap = 1.0, 0.1, 0.01, 0.001$, in lin-log scales. 
We observe that the decay becomes slower as $\gap$ decreases. Additionally, we present the slopes for \Cref{fig:strong_loc}(a), which displays a square-root dependence on $\gap$.

\begin{figure}[H]
    \begin{minipage}[b]{0.45\textwidth}
        \centering
        \includegraphics[width = 7.5cm]{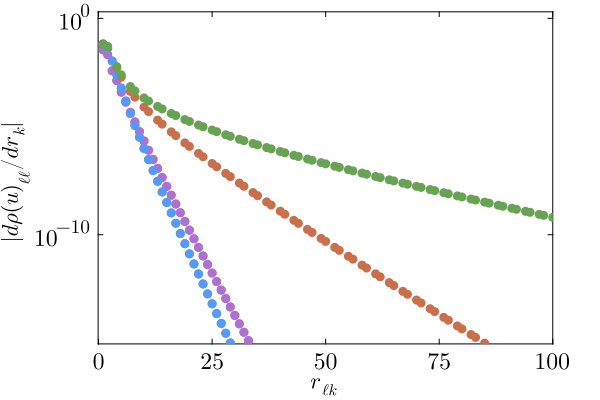}
        \caption*{(a)}
    \end{minipage}
    \begin{minipage}[b]{0.5\textwidth}
    \centering
        \begin{tabular}{c|c|c}
		    \hline
		    $\gap$ & slope &  $c\gap^{\alpha}$ \\ \hline
		    0.1   & -0.4711 & -0.4700 \\ \hline
		    0.01  & -0.1385 & -0.1451 \\ \hline
		    0.001 & -0.0556 & -0.0448 \\ \hline
		\end{tabular}
        \vspace{1cm}
        \caption*{(b)}
    \end{minipage}
    \caption{1D chain: the strong locality in homogeneous system; the third column displays the optimal fitting between $\gap$ and slope with $c = -1.51$ and $\alpha = 0.51$; cf. Theorem~\ref{thm: sharp_estimate}}
    \label{fig:strong_loc}
\end{figure}

%

To further substantiate our conclusions, we provide an example from real systems with a small $\gap$ in the next section.
\subsection{\texorpdfstring{$\bf{Mg_2Si}$}{}}
The system we investigate here is an antifluorite-type compounds, a well-known semiconductor used in thermoelectric applications, characterized by a small indirect band gap. We conduct tests using the package $\mathsf{DFTK.jl}$ \cite{herbst2021dftk}, a library of Julia routines for working with plane-wave density-functional theory algorithms. Although the discussion in the paper focuses on the tight binding model, the property of locality is not limited to this. Similar results can be found in \cite{benzi2013decay, jewski2004exact, prodan2005nearsightedness}.

The energy bands for ${\rm Mg_2Si}$ are presented in \Cref{fig:gap_bz}. The extrema of valence and conduction bands occur at the point $\Gamma$ and $\rm X$ respectively, which results in a small $\gap$ around 0.21 $eV$. However, the $\Gap$ is  relatively large around 1.8 $eV$.

We use a supercell model consisting of $4 \times 4 \times 4$ unit cells to verify the nearsightedness and strong locality. To make the results more convincing, we also present the corresponding tests for $\rm C$. In \Cref{fig:Mg2Si_weak}, (a) to (d) display the decay in homogeneous systems and the perturbed ones with infinity norm of $\ep = 0.01$. 

\begin{figure}[H]
    \centering
    \subfigure[$\rm{Mg_2Si}$: homogeneous ]{\includegraphics[width = 7.1cm]{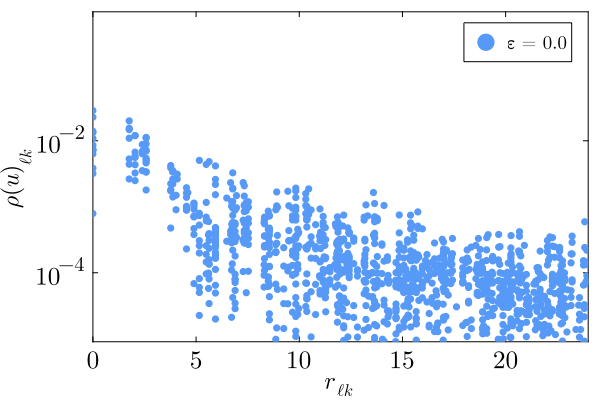}}
    \subfigure[$\rm{C}$: homogeneous]{\includegraphics[width = 7.0cm]{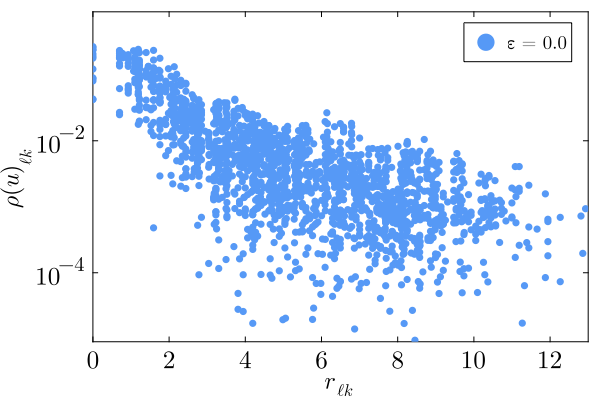}}
    \subfigure[$\rm{Mg_2Si}$: $\ep = 0.01$]{\includegraphics[width = 7.1cm]{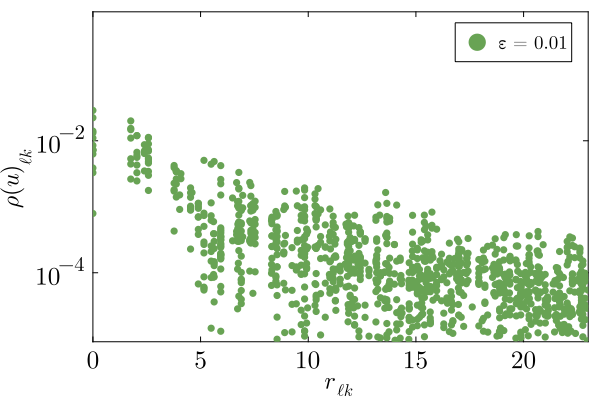}}  
    \subfigure[$\rm{C}$: $\ep = 0.01$]{\includegraphics[width = 7.0cm]{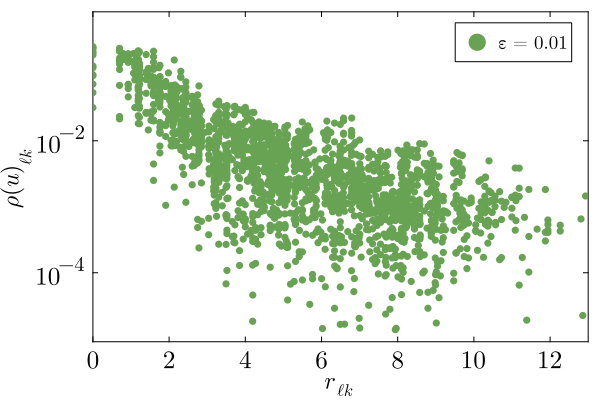}} 
    \caption{$\rm Mg_2Si$ and $\rm C$: the nearsightedness property of homogeneous and perturbed systems; cf. \Cref{thm:nearsight} and \Cref{thm:prtb}.}
    \label{fig:Mg2Si_weak}
\end{figure}

\begin{figure}[H]
    \centering
    \subfigure[$\rm{Mg_2Si}$: Energy Hessian]{\includegraphics[width = 7.0cm]{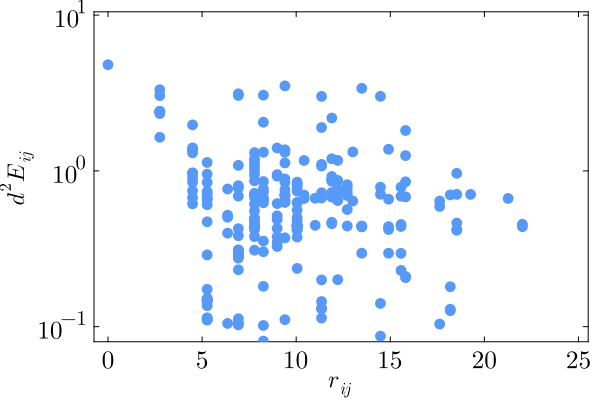}}
    \subfigure[$\rm{C}$: Energy Hessian]{\includegraphics[width = 7.0cm]{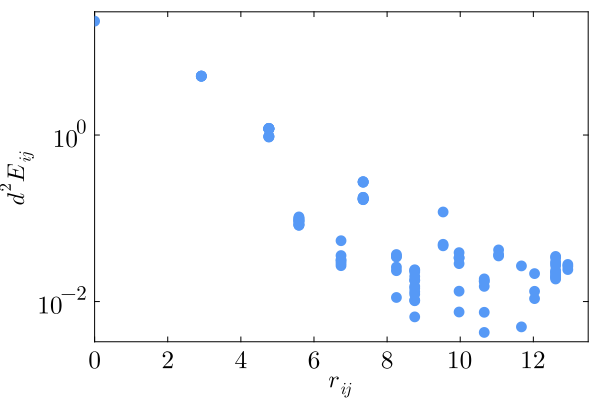}}
    \caption{$\rm Mg_2Si$ and $\rm C$: the strong locality in a homogeneous system depends on $\gap$ rather than $\Gap$; cf. \eqref{eq:strong}.}
    \label{fig:Mg2Si_strong}
\end{figure}

Finally, we examine the strong locality of $\rm{Mg_2Si}$ and $\rm C$ by evaluating the energy Hessian in homogeneous systems, which we computed using the finite difference of site forces $[F(\dmr)]_i, i \in \Lambda$, defined by
\begin{align}
   d^2E_{ij} := \left|\frac{\partial^2 E(\dmr)}{\partial \br_i\partial \br_j}\right| \approx \left|\frac{[F\big(\dmuEp\big)]_i - [F(\dmr)]_i}{\partial \br_j}\right|, 
\end{align}
where $u_{\ep}(\ell) = 0$ if $\ell \neq j$ and $ u(j)= \ep j$ for some $\ep >0$ as the step length.
In \Cref{fig:Mg2Si_strong}, by plotting the data $(r_{ij}, d^2E_{ij})$, we find it challenging to capture the decay in $\rm{Mg_2Si}$ due to its small $\gap$. In contrast, the decay is clearly visible in $\rm C$, which has a moderate $\gap$, comparable to $\Gap$.

\section{Conclusions}
\label{sec:Conclusions}
In this paper, we refined our understanding of the dependence of nearsightedness in linear tight binding models on the band structure. We focussed on materials with indirect band gaps, which are usually small in semiconductors such as ${\rm Si}$ and ${\rm Mg_2Si}$. In these systems, existing results suggest that small gaps theoretically render the locality negligible.

By performing the locality analysis in reciprocal space, we obtained a sharper estimate that highlights how the off-diagonal decay of the density matrix depends on $\Gap$, which is larger than $\gap$, as shown in \Cref{fig:gap_bz}. That is, in the small gap limit, the nearsightedness property can still be maintained as long as $\Gap \gg \gap$.
In this context, our results explain why some small gapped systems still exhibit fast decay of local operators, as shown in \Cref{fig:weak_loc} for the 1D toy model and \Cref{fig:Mg2Si_weak} for 3D systems. 

Moreover, we also demonstrated that the impact of perturbations decays as the lattice sites move far away from the perturbations, and this decay is controlled by $\gap$ and the norm of perturbations.
Regarding strong locality (e.g., interatomic interactions), our results, combined with those in \cite{ortner2020locality}, indicate that strong locality depends on $\gap$ and thus may lose fast decay in small gapped systems when considering the Hessian of energy or forces, as shown in \Cref{fig:Mg2Si_strong}. 
Moreover, the results in \Cref{fig: sqrt_Gap} and \Cref{fig:strong_loc} suggest that, under certain circumstances, the decay exhibits the square-root behaviour with respect to $\gap$ or $\Gap$, indicating the possibility of sharp estimates. To explore this further, we conducted a focused case study to support and justify these observations. 
A more detailed analysis of this effect will be pursued in future work.

\section*{Acknowledgements}
We gratefully acknowledge stimulating discussions with Antoine Levitt and Alexander Watson. 
HC's work was supported by the National Natural Science Foundation of China (No. NSFC12371431).
JF acknowledges the financial support from the China Scholarship Council (Grant No. 202306040154). 
CO was supported by NSERC Discovery Grant GR019381 and NFRF Exploration Grant GR022937. 
JT was supported by 
%
the Engineering and Physical Sciences Research Council (EPSRC) Grant EP/W522594/1. 
%

\section{Proofs of the Main Results}
\label{sec:Proofs of the Main Results}

\subsection{Proof of \texorpdfstring{\Cref{thm:nearsight}}{}}
\label{proof:thm_nearsight}
In preparation of the proof, we begin by proving a result on perturbations of the spectrum of $\Hxi$, given by \eqref{eq:block_ham}:

\begin{lemma}[Perturbations of the Spectrum]
\label{lem:pertb_spectrum}
    Fix $\bxi \in \Gamma^*$. Then, 
    \begin{align*}
        \sigma\left(\Ham_{\bxi + i\bzeta}\right) \subset B_{\ccRef|\bzeta|}\left(\sigma\big(\Hxi\big)\right),
        \quad 
        \forall \bzeta \in \R^d ~\text{with}~ |\bzeta|\leq\frac{1}{2}\gamma_0,
    \end{align*}
    where $\ccRef = c h_0 \gamma_0^{-(d+1)}$ with $c>0$ depending only on $d, \m, M$ and $\nob$. 
\end{lemma}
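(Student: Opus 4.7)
The plan is to apply the Bauer--Fike theorem for normal matrices to the pair $\Ham_{\bxi}$ and $\Ham_{\bxi + i\bzeta}$. The key observation is that $\Ham_{\bxi}$ is Hermitian (and of fixed finite size $\na\nob \times \na\nob$ independent of $\bxi$), whereas $\Ham_{\bxi+i\bzeta}$ is a perturbation that is \emph{not} generally Hermitian, but the complex rotation in $\bxi$ is controlled by the off-diagonal decay of $\Hr$. The proof thus reduces to estimating the operator norm of the perturbation $E_{\bzeta} := \Ham_{\bxi+i\bzeta} - \Ham_{\bxi}$ by something linear in $|\bzeta|$, with a constant independent of $\bxi$.

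First, I would use the Bloch representation \eqref{eq:block_ham} to write the entries of $E_{\bzeta}$ explicitly as
\[
[E_{\bzeta}]^{ab}_{\ell_0 k_0}
= \sum_{\alpha \in \Z^d} [\Hr]^{ab}_{\ell_0 + \A\alpha, k_0}\, e^{-i(\ell_0 - k_0 + \A\alpha)\cdot\bxi}\bigl(e^{(\ell_0-k_0+\A\alpha)\cdot \bzeta}-1\bigr).
\]
Applying the elementary bound $|e^{s}-1|\leq |s| e^{|s|}$ together with the decay of $\Hr$ from $\asTB$, and writing $r_\alpha := |\ell_0 - k_0 + \A\alpha|$ (so $|\bxi|$ drops out as a modulus-one phase), gives the entrywise estimate
\[
\bigl|[E_{\bzeta}]^{ab}_{\ell_0 k_0}\bigr|
\leq h_0 |\bzeta| \sum_{\alpha \in \Z^d} r_\alpha \, e^{-(\gamma_0 - |\bzeta|) r_\alpha}.
\]
The restriction $|\bzeta| \leq \tfrac{1}{2}\gamma_0$ allows me to replace the exponent by $-\tfrac{1}{2}\gamma_0 r_\alpha$.

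Next, I would bound the remaining sum by an integral estimate, using the minimum-separation condition $r_{\ell k}\geq \m$ (valid in the reference configuration) to control the number of lattice points in each dyadic shell. This yields
\[
\sum_{\alpha \in \Z^d} r_\alpha\, e^{-\frac{1}{2}\gamma_0 r_\alpha}
\leq C(d,\m)\, \gamma_0^{-(d+1)},
\]
independently of $\ell_0,k_0 \in \Lambda_0$. Converting the entrywise bound to an operator-norm bound via the crude estimate $\|E_{\bzeta}\|_2 \leq \na \nob \max_{a,b,\ell_0,k_0}|[E_{\bzeta}]^{ab}_{\ell_0 k_0}|$ produces $\|E_{\bzeta}\|_2 \leq c\, h_0 \gamma_0^{-(d+1)} |\bzeta|$ with $c = c(d,\m,\na,\nob)$, which is the constant $\ccRef |\bzeta|$ claimed in the lemma.

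Finally, since $\Ham_{\bxi}$ is Hermitian and hence unitarily diagonalisable, the Bauer--Fike theorem yields that for every eigenvalue $\mu$ of $\Ham_{\bxi + i\bzeta} = \Ham_{\bxi} + E_{\bzeta}$, $\min_{\lambda \in \sigma(\Ham_{\bxi})}|\mu - \lambda| \leq \|E_{\bzeta}\|_2$, which is precisely the required containment. The only mild obstacle is the combinatorial sum estimate, which forces the $\gamma_0^{-(d+1)}$ scaling; all the other steps are transparent manipulations. The $\bxi$-independence of $\ccRef$ comes for free because $\bxi$ enters only through a unimodular phase.
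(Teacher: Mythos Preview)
Your proposal is correct and follows essentially the same approach as the paper: bound the entries of $\Ham_{\bxi+i\bzeta}-\Hxi$ using the exponential decay in $\asTB$, sum to get a norm bound linear in $|\bzeta|$ with constant $c\,h_0\gamma_0^{-(d+1)}$, and apply a spectral perturbation result. The only cosmetic differences are that the paper uses the Frobenius norm and invokes ``Weyl's theorem'' (citing Kato) where you use the operator norm and Bauer--Fike; your naming is in fact the more accurate one, since $\Ham_{\bxi+i\bzeta}$ is not Hermitian.
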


\begin{proof}
    According to \Cref{appd:bloch_trans}, for all $1\leq a,b \leq \nob$ and $\ell_0, k_0 \in \Lambda_0 = \Gamma\cap \Lambda$, we have 
    \begin{align*}
        \big[\Ham_{\bxi + i\bzeta}\big]_{\ell_0 k_0}^{ab} =
        \sum_{\alpha\in \mathbb Z^d} \Big([\Hr]_{\ell_0 + \mathsf{A}\alpha, k_0}^{ab} e^{(\ell_0 - k_0 +\mathsf{A}\alpha) \cdot \bzeta}  \Big)
        e^{- i (\ell_0 - k_0 +\mathsf{A}\alpha) \cdot \bxi}, 
    \end{align*}
    and this yields that
 \begin{align*}
    \begin{split}
        \left|\big[\Ham_{\bxi + i\bzeta}\big]_{\ell_0 k_0}^{ab} - \big[\Hxi\big]_{\ell_0 k_0}^{ab}\right|
        = &
        \left| \sum_{\alpha\in\Z^d}  [\Hr]_{\ell_0 + \mathsf{A}\alpha, k_0}^{ab}\big(e^{(\ell_0 - k_0 +\mathsf{A}\alpha) \cdot \bzeta} - 1\big)\right| \\
        \leq &
        h_0\sum_{\alpha\in\Z^d} e^{-\gamma_0|\ell_0 - k_0 +\mathsf{A}\alpha|} \big(e^{|\ell_0 - k_0 +\mathsf{A}\alpha||\bzeta|} - 1\big) .
    \end{split}
\end{align*}
For fixed $\ell_0, k_0$, we define
\begin{align*}
    F(\beta) = \sum_{\alpha\in\Z^d} e^{-\gamma_0|\ell_0 - k_0 +\mathsf{A}\alpha|} \big(e^{\beta|\ell_0 - k_0 +\mathsf{A}\alpha|} - 1\big),
\end{align*}
which is well-defined and differentiable on $\beta \in [0,\frac{1}{2}\gamma_0]$. Additionally, we see that $F(0) = 0$ and the derivative satisfies 
\begin{align*}
    F'(\beta) =
    \sum_{\alpha\in\Z^d} e^{-(\gamma_0 - \beta)|\ell_0 - k_0 +\mathsf{A}\alpha|}|\ell_0 - k_0 +\mathsf{A}\alpha|
    &\leq
    \sum_{\alpha\in\Z^d} e^{-\frac{1}{2}\gamma_0|\ell_0 - k_0 +\mathsf{A}\alpha|}|\ell_0 - k_0 +\mathsf{A}\alpha| \leq \frac{C_{\m, d}}{\gamma_0^{d + 1}}, 
\end{align*}
where $C_{\m, d}$ is a constant depending on $\m$ and $d$. In particular, we have $|F(\beta)|\leq \frac{C_{\m, d}}{\gamma_0^{d + 1}} \beta$.
Based on these, we can control the difference of $\Ham_{\bxi + i\bzeta}$ and $\Hxi$ by $|\bzeta|$ as follows 
\begin{align}
\label{eq:ham_k_diff}
    \begin{split}
        \left\|\Ham_{\bxi + i\bzeta} -  \Hxi\right\|_{\mathrm{F}}^2
        = & \sum_{\substack{\ell_0, k_0 \in \Lambda_0\\1\leq a,b \leq \nob }} \left|\big[\Ham_{\bxi + i\bzeta}\big]_{\ell_0 k_0}^{ab} - \big[\Hxi\big]_{\ell_0 k_0}^{ab}\right|^2\\
        \leq &
        (\nob h_0)^2\sum_{\substack{\ell_0, k_0 \in \Lambda_0 }} \left(\sum_{\alpha\in\Z^d} e^{-\gamma_0|\ell_0 - k_0 +\mathsf{A}\alpha|} \big(e^{|\ell_0 - k_0 +\mathsf{A}\alpha||\bzeta|} - 1\big) \right)^2 \\
        \leq &
        (M \nob h_0)^2\sup_{\substack{\ell_0, k_0 \in \Lambda_0}}\left(\sum_{\alpha\in\Z^d} e^{-\gamma_0|\ell_0 - k_0 +\mathsf{A}\alpha|} \big(e^{|\ell_0 - k_0 +\mathsf{A}\alpha||\bzeta|} - 1\big) \right)^2\\
        \leq &
        \left(C_{\m, d} M \nob h_0\gamma_0^{-(d +1)}\right)^2|\bzeta|^2
    \end{split}
\end{align}
for all $|\bzeta|\leq \frac{1}{2}\gamma_0$. By Weyl's theorem \cite{kato2013perturbation},  we have 
\begin{align}
\label{eq:ham_dist}
 \dist\left(\sigma\big(\Hxi\big),\, \sigma\big(\Ham_{\bxi + i\bzeta}\big)\right) \leq  
 \left\|\Ham_{\bxi + i\bzeta} -  \Hxi\right\|_{\mathrm{F}} \leq \ccRef |\bzeta|,
\end{align}
where $\ccRef := c h_0 \gamma_0^{-(d+1)}$ with $c > 0$ depending on $d, \m, M$ and $\nob$. Therefore, we can conclude that
\begin{align*}
    \sigma\left(\Ham_{\bxi + i\bzeta}\right) \subset B_{\ccRef |\bzeta|}\left(\sigma\big(\Hxi\big)\right),
\end{align*}
as required.
\end{proof}

Now we will use Lemma~\ref{lem:pertb_spectrum} to prove \Cref{thm:nearsight}:
\begin{proof}[Proof of \Cref{thm:nearsight}]
   Let us choose a simple, closed and positively oriented contour $\cc$ containing the line segment $i[-R,R]$ for some $R > 0$ (chosen later to be sufficiently large) and such that $\ep_{\mathrm F} + \mathscr C$ encircles the valence bands $\{\ep_n \}_{n \leq N_0}$ of $\Hr$ and satisfies
   \begin{align}
   \label{eq:contour_dist}
    \dist\Big(\ep_{\mathrm
     F} + \cc, \sigma(\Hr)\Big) > 0.
    \end{align}  
   Applying the contour integral representation in \eqref{eq:dm_k}, we have
    \begin{align}
    \label{eq:dm_component}
        \dmr_{\ell k, ab}
        &= \dashint_{\Gamma^*} \bigg[\oint_{\cc + \nu(\bxi)} (z - \Hxi)^{-1} \frac{\mathrm{d}z}{2\pi i} \bigg]_{\ell_0 k_0, ab} e^{i \bxi\cdot (\ell - k) }\mathrm{d}\bxi, 
    \end{align}
    where $\nu: \Gamma^* \rightarrow \C$ is a mapping to be determined,  satisfying $\varepsilon_{N_0}(\bxi) < \nu(\bxi) < \varepsilon_{N_0+1}(\bxi)$ for all $\bxi \in \Gamma^*$. 
    For simplicity, 
    let us define $R_z(\bxi):= \big( z - \Hxi\big)^{-1}$ and 
    \begin{align}
    \label{eq:res_int}
        \widehat{\rho}(\bxi) \coloneqq \oint_{\cc + \nu(\bxi)} (z - \Hxi)^{-1} \frac{\mathrm{d}z}{2\pi i} 
        = \oint_{\cc + \nu(\bxi)} R_z(\bxi)\, \frac{\mathrm{d}z}{2\pi i} 
        \quad\in \C^{\na\nob \times \na\nob}.
    \end{align}
    Then \eqref{eq:dm_component} can be rewritten as 
    \begin{align}
    \label{eq:dm_fourier}
        \dmr_{\ell k, ab}  = \dashint_{\Gamma^*}\big[\widehat{\rho}(\bxi)\big]_{\ell_0k_0, ab}\, e^{i \bxi\cdot (\ell - k) }\mathrm{d}\bxi,
    \end{align}
    thus the decay rate of $\big[\dmr_{\ell k, ab}\big]$ is related to the analyticity of $ \bm \xi \mapsto \left[\widehat{\rho}(\bxi)\right]_{\ell_0k_0, ab}$ \cite{reed1975ii}, which is determined by the regularity of $\nu(\cdot)$ in \eqref{eq:res_int}. 

    We first define
    $\mathsf{g}(\bm \xi) \coloneqq \ep_{N_0 + 1}(\bm \xi) - \ep_{N_0}(\bm \xi)$ and note that $\mathsf{gap}_+ = \min_{\Gamma^*} \mathsf{g}$. Next, introduce 
    $\alpha > 0$ to be determined later, and 
    \begin{align}
        \label{eq: D_R}
        D_{\mathrm{R}} \coloneqq \left\{ \bm \xi + i \bm \zeta \colon \bm \xi, \bm \zeta \in \mathbb R^d, |\bm \zeta| \leq \min\Big\{\tfrac{1}{2}\gamma_0,\, \tfrac{\alpha}{ \ccRef} \mathsf{gap}_+
        \Big\}\right\}
    \end{align}
    so that $\sigma\big(\Ham_{\bxi + i\bzeta}\big) \subset B_{\alpha \mathsf{gap}_+}\big( \sigma(\Hxi) \big)$ for $\bm \xi + i \bm \zeta \in D_{\mathrm
    R}$ with $\bm \xi, \bm \zeta \in \mathbb R^d$. The mapping $\nu_0 \colon \mathbb C^d \to \mathbb C$ defined by 
    \begin{align}
    \label{eq:def_nu_0}
        \nu_0(\bxi + i \bzeta) \coloneqq \frac{\ep_{N_0}(\bxi) + \ep_{N_0+1}(\bxi)}{2},
        \quad \bm \xi, \bm \zeta \in \mathbb R^d
    \end{align} 
    is continuous on $D_{\mathrm R}$ and hence admits a polynomial approximation $\nu \colon D_{\mathrm R} \to \mathbb C$ with 
    \begin{align}
    	\label{eq:nu_0}
        \big|\nu(\bxi + i\bzeta) - \nu_0(\bxi + i\bzeta)\big| < \frac{\alpha}{2}
        \mathsf{gap}_+, \quad \bxi + i\bzeta \in D_{\rm R}.
    \end{align}
    We therefore have $\nu(\bm \xi + i\bm \zeta) \in B_{\frac{\alpha}{2} \mathsf{gap}_+}\big( \frac{1}{2}(\ep_{N_0}(\bm \xi) + \ep_{N_0 + 1}(\bm \xi)) \big)$ and thus 
    \begin{align}
    \label{eq:distz_nu}
    \mathrm{dist}\left( \nu(\bm \xi + i \bm \zeta), \sigma\big(\Ham_{\bxi + i \bzeta} \big) \right) 
    \geq 
    \frac
        {\mathsf{g}(\bxi) - 3 \alpha \mathsf{gap}_+}
        {2}
    \geq \frac{1-3\alpha}{2} \mathsf{g}(\bm \xi).
    \end{align}
    We choose $\alpha \coloneqq \frac{1}{6}$ and $R$ sufficiently large (recall, $i[-R,R] \subset \mathscr C$) so that $\dist\left(z, \sigma\big(\Ham_{\bxi + i \bzeta} \big)\right) \geq \frac{1}{4} \mathsf{g}(\bm \xi)$ for all $z \in \nu(\bm \xi) + \mathscr C$ and $\bxi + i\bzeta \in D_{\rm R}$. 
    Since $\nu \colon \mathbb C^d \to \mathbb C$ is analytic on $D_{\mathrm{R}}$, we have that $\widehat{\rho}$ extends to an analytic function on $D_{\mathrm{R}}$. 
    
    Therefore, for all $|\bm \zeta| \leq \eta_+ \coloneqq \min\{ \frac{1}{2}\gamma_0, \frac{\alpha}{c_{\mathrm{r}}} \mathsf{gap}_+\}$, we have
    \begin{align}
    \label{eq:dm_decay_prefac}
        C^2\coloneqq \int_{\Gamma^\star} |\widehat{\rho}(\bm \xi + i \bm \zeta)_{\ell_0 k_0, ab}|^2 \mathrm{d}\bm \xi
        = \sum_{k} \left|\dmr_{\ell k, ab}\right|^2 e^{2\,\bzeta \cdot (\ell - k)}  < \infty.
    \end{align}
    That is, we have the required bound $\big|\dmr_{\ell k, ab}\big| \leq C e^{-\etaP \,r_{\ell k}}$.

    Finally, we may show the dependence of $C$ by noticing that, 
    \begin{align}
    \label{eq:const}
        C^2 
        \leq 
        \int_{\Gamma^\star}
            \left( \frac
            {\ell(\mathscr C)}
            {2\pi}
            \sup_{z \in \nu(\bm \xi+i\bm \zeta) + \mathscr C}|R_z(\bm \xi+i\bm \zeta)_{\ell_0 k_0, ab}|
        \right)^2
        \mathrm{d}\bm \xi
        \leq  
        \left( 
        \frac
            {2\ell(\mathscr C)}
            {\pi}
        \right)^2
        \int_{\Gamma^*} \mathsf{g}(\bm \xi)^{-2} \dd\bxi
    \end{align}
    where $\ell(\mathscr C)$ denotes the length of the contour. Here, we have used the fact that $|R_z(\bm \xi)_{\ell_0 k_0,ab}| \leq \dist\big(z,\sigma(\Hxi)\big)^{-1} \leq 4 \mathsf{g}(\bm \xi)^{-1}$. This concludes the estimate \eqref{eq:thm_nearineq}.
\end{proof}

\subsection{Proof of \texorpdfstring{\Cref{thm:prtb}}{}}
\label{proof:prtb}
In order to prove the conclusion, we require some preliminary results from \cite{ortner2020locality}. Here we will restate these conclusions in our setting, and give some variants to establish a foundation for the subsequent proof. 

We start with an improved Combes--Thomas resolvent estimate: 

\begin{lemma}
    \label{lem:CT_est_2}
     Suppose $u \in \adm(\Lambda)$ and $z\in \mathbb C$ such that $\dist\left(z, \sigma\big(\Hr\big)\right) \geq \Dr$ and $\dist\left(z, \sigma\big(\Ham(u)\big)\right) \geq \D(u)$ for some $\Dr,\D(u) > 0$. 
     Then, 
    \begin{align}
    \label{eq:CT_est_2}
        \Big|\big(z - \Ham(u)\big)^{-1}_{\ell k, ab}\Big|
            \leq 
            \frac{4}{\Dr}e^{-\CT(\frac{1}{2}\Dr) r_{\ell k}} + 
            C(u) e^{-\frac{1}{2}\CT(\frac{1}{2}\Dr)(|\ell| + |k|)},
    \end{align}
    with $\gamma_{\mathrm{CT}}(\frac{1}{2}\Dr) = c_0 \min\{ 1, \frac{1}{2}\Dr\}$, and 
    \begin{align}
        \label{eq:cu}
        C(u) \coloneqq C 
            \frac
            {e^{\CT(\frac{1}{2}\Dr) R_1}}
            {(\D^{\rm ref})^{d + 4} }
            \frac
                {\|Du\|_{\ell^2_{\Upsilon}}}
                {\D(u)^{d+1}},
        \quad 
        R_1>0 \text{ s.t. } \|Du\|_{\ell^2_{\Upsilon}(\Lambda \setminus B_{R_1})} \leq c_1 \mathfrak{d}^{\rm ref}
    \end{align}
    for some $c_1>0$, and $c_0, C > 0$ depend only on $d, \mathfrak{m}, h_0, \gamma_0, N_{\mathrm b}$ and $M$. In particular, for sufficiently small $\|Du\|_{\ell^2_{\Upsilon}}$, $R_1$ can be chosen arbitrarily. 
\end{lemma}

\begin{proof}
    The point is mainly based on the approximation of $\Ham(u)$ by the finite rank update of $\Hr$~\cite{ortner2020locality}. We summarise the proof here, as parts of the ideas will be revisited in the subsequent discussion.  \cite[Lemma 5.4]{ortner2020locality} states that, for each $\delta >0$, there exist $R_{\delta} >0$ (depending on $\delta$) and operators $P_{\delta}(u), P_{\loc}(u)$ such that 
    \begin{align}
    \label{eq: finite_update}
        \Ham(u) = \Hr + P_{\delta}(u) + P_{\loc}(u),
    \end{align} 
    where $\|P_{\delta}(u)\|_{\mathrm{F}} \leq \delta$, and $[P_{\loc}(u)]^{ab}_{\ell k} = [\Ham(u) - \Ham^{\rm ref}]_{\ell k}^{ab}$ if $(\ell, k) \in B_{R_{\delta}} \times B_{R_{\delta}}$; else, $[P_{\loc}(u)]^{ab}_{\ell k} = 0$.
    That is, we can decompose $\Ham(u) - \Ham^{\rm ref}$ into two parts: $P_{\delta}$ and $P_{\rm loc}$, which are small in terms of Frobenius norm and rank, respectively. As for the selection of $R_{\delta}$, 
    by \cite[Lemma 5.4]{ortner2020locality}, we can first choose a sufficiently large $R_1>0$ satisfying $|u(\ell) - u(k)| \leq \m |\ell - k|$ for all $\ell, k \in \Lambda \setminus B_{R_1}$, so that 
\begin{align}
\label{eq:ham_diff_mid}
    \Big|\Ham(u)_{\ell k, ab} - \Hr_{\ell k, ab}\Big|
        &\leq
        h_0 e^{-\frac{\sqrt{3}}{2}\gamma_0\m r_{\ell k}}
        |u(\ell) - u(k)|
        \quad \forall \ell, k \in \Lambda\setminus B_{R_1}, 
\end{align}
and then, for $R_2 > 0$, the regularity of $\Ham(u)$ and $\Hr$ gives
\begin{align}
\label{eq:ham_diff_out}
    \sum_{\ell \in B_{R_1}} \sum_{\substack{k\in \Lambda\backslash B_{R_1}\\{|\ell -k|> R_2}}}
    \Big|\Ham(u)_{\ell k, ab} - \Hr_{\ell k, ab}\Big|^2
        \leq
        C\sum_{\ell \in B_{R_1}} \sum_{\substack{k\in \Lambda\backslash B_{R_1}\\{|\ell -k|> R_2}}} e^{-2\gamma_0\m|\ell -k|}
        \leq CR_1^d e^{-\gamma_0\m R_2}, 
\end{align}
where $C>0$ depends on $d, \m, \gamma_0, h_0$. Through $\eqref{eq:ham_diff_mid}$ and $\eqref{eq:ham_diff_out}$, we have
\begin{align}
\label{eq:p_delta}
\|P_{\delta}\|^2_{\rm F}
    &=
    \sum_{1\leq a, b \leq \nob}
    \Big(
        \sum_{\ell, k \in \Lambda\backslash B_{R_1}} + \sum_{\ell \in B_{R_1}} \sum_{\substack{k\in \Lambda\backslash B_{R_1}\\{|\ell -k|> R_2}}}
    \Big)
    \Big|\Ham(u)_{\ell k, ab} - \Hr_{\ell k, ab}\Big|^{2} \nonumber\\
    &\leq
    C\big(
        \|Du\|_{\ell^2_{\Upsilon}(\Lambda\backslash B_{R_1})}^{2} 
        + R_1^d e^{-\gamma_0\m R_2}
    \big), 
\end{align}
which implies the value can be arbitrarily small by choosing $R_1$ and then $R_2$ sufficiently large.
Specifically, we can choose $R_1, R_2$ with
\begin{align}
    \label{eq:p_Dref}
    C\big(
        \|Du\|_{\ell^2_{\Upsilon}(\Lambda\backslash B_{R_1})}^{2} 
        + R_1^d e^{-\gamma_0\m R_2}
        \big)
        \leq \big(\frac{1}{2}\Dr\big)^2,
\end{align}
By setting $R_{\delta} = R_1 + R_2$, we obtain
$\|P_{\delta}\|_{\rm F} \leq \frac{1}{2}\Dr$ and $P_{\rm loc}$ satisfying the identity \eqref{eq: finite_update}, where the choice of $R_{\delta}$ depends on $\Dr$.

Since $P_{\loc}(u)$ has finite rank, we can perform a decomposition for the submatrix $P_{\loc}|_{B_{R_{\delta}} \times B_{R_{\delta}}} := \{(P_{\loc})_{ij}\}_{(i,j) \in B_{R_{\delta}} \times B_{R_{\delta}}}$, by operators $U, V$ satisfying $P_{\loc} = UV$, where $U|_{B_{R_{\delta}} \times B_{R_{\delta}}}$ is orthogonal and $U_{\ell i} = 0, V_{j\ell} = 0$, for $\forall i,j \in B_{R_{\delta}}$ and $\ell \in \Lambda \setminus B_{R_{\delta}}$. For brevity, we denote $A := z - \Hr - P_{\delta}(u)$, then $A - P_{\rm loc}(u) = z - \Ham(u)$.
Applying the Woodbury identity~\cite{hager1989updating} for $A - P_{\rm loc}(u)$ yields
\begin{align}
\label{eq:ham_finite_prtb}
    \big(A - P_{\rm loc}(u)\big)^{-1}
        = \big(A - UV\big)^{-1}
        = A^{-1} + A^{-1}U\big(I - VA^{-1}U\big)^{-1}VA^{-1},
\end{align}
where $I$ is the unit matrix of size $|B_{R_{\delta}}| \times |B_{R_{\delta}}|$.
Notice that $U\big(I - VA^{-1}U\big)^{-1}V$ also has finite rank. 
Thus, for $\forall \ell, k \in B_{R_{\delta}}, 1 \leq a, b \leq \nob$, by following \cite{ortner2020locality}, we have 
\begin{align}
\label{eq:URV}
    \begin{split}
     &\quad   
     \left|\left[U\big(I - VA^{-1}U\big)^{-1}V\right] _{\ell k, ab} \right| 
         = 
         \left|\left[
         A\big(A - P_{\rm loc}(u)\big)^{-1}P_{\rm loc}(u)
         \right] _{\ell k, ab} 
         \right| \\
     &\leq 
         \sum_{\substack{m,n:|\br_n|\leq R_{\delta}}} 
         \sum_{a_1 b_1} 
         \Big|\big(z - \Hr - P_{\delta}(u)\big)_{\ell m , a a_1}\Big|
         \left|\big(z - \Ham(u)\big)^{-1}_{mn,a_1b_1} \right| 
         \big|P_{\rm loc}(u)_{nk, b_1 b}\big|\\
     &\leq 
         \frac{2\nob (|z| + h_0)}{\D(u)} 
         \left(
            \sum_{\substack{m,n:|\br_n|\leq R_{\delta}}} 
            e^{-\gamma_0r_{\ell m}} e^{-\CT(\D(u)) r_{mn}}
         \right)  
         \max_{\substack{\ell, k \in B_{R_{\delta}}\\ 1\leq a,b \leq \nob}} \big|P_{\mathrm{loc}}(u)_{\ell k, ab}\big|\\
     & \leq 
        \frac{c}{\big(\D(u)\big)^{d+1}}
        \max_{\substack{\ell, k \in B_{R_{\delta}}\\ 1\leq a,b \leq \nob}} \big|P_{\mathrm{loc}}(u)_{\ell k, ab}\big|,
    \end{split}
\end{align}
where $c > 0$ depends on $h_0, \gamma_0, \m, d, \nob, M$, and is bounded independently of $z$ since we are considering $z$ in a bounded set. From the above proof of \eqref{eq: finite_update}, we can control the maximum of $P_{\rm loc}(u)$ in $B_{R_{\delta}}$ by
\begin{align}
\label{eq:P_loc}
    \max_{\substack{\ell, k \in B_{R_{\delta}}\\ 1\leq a,b \leq \nob}} |P_{\mathrm{loc}}(u)|
        &= 
            \max_{\substack{\ell, k \in B_{R_{\delta}}\\ 1\leq a,b \leq \nob}}
            \left| \big[\Ham(u) - \Hr - P_{\delta}\big]_{\ell k}^{ab} \right|
            =
            \max_{\substack{\ell, k \in B_{R_{\delta}}\\ 1\leq a,b \leq \nob}}
            \left| \big[\Ham(u) - \Hr \big]_{\ell k}^{ab} \right|
            \nonumber\\
        &\leq
            \max_{\substack{\ell, k \in B_{R_{\delta}}\\ 1\leq a,b \leq \nob}}
            \Big|
                \nabla h^{ab}_{\ell k}\big( \pmb{\nu}_{\theta} \big)\cdot 
                \big[u(\ell) - u(k)\big]
            \Big|
        \leq
            C\|Du\|_{\ell^2_{\Upsilon}(B_{R_{\delta}})}, 
\end{align}
 where we applied mean value theorem and $\pmb{\nu}_{\theta}:=  (1 - \theta)\br_{\ell k}(u) + \theta\br_{\ell k}$ for some $\theta \in [0, 1]$, and the constant $C>0$ depends only on $d, \m, h_0, \gamma_0$. Additionally, as $\|P_{\delta}\|_{\rm F} < \frac{1}{2}\Dr$, the Combes–Thomas estimate \eqref{eq:CombesThomas} for $A$ is
\[
\big|A^{-1}_{\ell k, ab}\big|
    \leq 
    \frac{4}{\Dr}
    e^{-\CT(\frac{1}{2}\mathfrak{d}^{\mathrm{ref}}) r_{\ell k}}.
\]
Combining this with \eqref{eq:URV}, \eqref{eq:P_loc} and the exponential sum estimate in Lemma~\ref{lem:expsum}, the second term of \eqref{eq:ham_finite_prtb} is bounded by
\begin{align}
    \left|
    \left[A^{-1}U\big(I - VA^{-1}U\big)^{-1}VA^{-1}\right] _{\ell k, ab} \right| 
        & \leq
            \frac{c\|Du\|_{\ell^2_{\Upsilon}(B_{R_{\delta}})}}{(\D^{\rm ref})^2 \big(\D(u)\big)^{d+1}}
            \sum_{\ell_1,\ell_2 \colon |\ell_l| \leq R_{\delta}}
                e^{-\CT(\frac{1}{2}\D^{\rm ref}) (r_{\ell\ell_1}+ r_{\ell_2 k})} \nonumber\\
        &\leq
            \frac{c\|Du\|_{\ell^2_{\Upsilon}}}{(\D^{\rm ref})^2 \big(\D(u)\big)^{d+1}}
            \left(
                \frac
                    {C_{d} e^{\frac{1}{2}\CT(\frac{1}{2}\Dr) R_{\delta}}}
                    {(\m\CT(\frac{1}{2}\Dr))^d} 
                e^{-\frac{1}{2}\CT(\frac{1}{2}\D^{\rm ref})(|\ell| + |k|)}
            \right)
            \nonumber\\
        & \leq
            C(u)\,e^{-\frac{1}{2}\CT(\frac{1}{2}\D^{\rm ref})(|\ell| + |k|)},\label{eq:u_depend}
\end{align}
where $C(u) := Ce^{\frac{1}{2}\CT(\frac{1}{2}\Dr) R_{\delta}} \|Du\|_{\ell^2_{\Upsilon}}\big/\big((\D^{\rm ref})^{d+2} \big(\D(u)\big)^{d+1} \big)$ with $C$ depending on $h_0, \gamma_0, \m, d, \nob$ and $M$. Noticing that, from \eqref{eq:p_Dref}, we can find some constants $C_1, C_2>0$ such that $e^{\frac{1}{2}\CT(\frac{1}{2}\Dr) R_2} = C_1 R_1^d(\Dr)^{-2}$, and $\|Du\|_{\ell^2_{\Upsilon}(\Lambda\setminus B_{R_1})} < C_2\Dr$. Hence, the prefactor $C(u)$ can be written as 
\begin{align*}
    C(u) = C 
            \frac
            {e^{\CT(\frac{1}{2}\Dr) R_1}}
            {(\D^{\rm ref})^{d + 4} }
            \frac
                {\|Du\|_{\ell^2_{\Upsilon}}}
                {\D(u)^{d+1}}.
\end{align*}
Finally, taking \eqref{eq:CombesThomas} and \eqref{eq:u_depend} into the entry of \eqref{eq:ham_finite_prtb}, we conclude that
\begin{align}
\label{eq:improved_CT}
    \begin{split}
        \left|\big(z - \Ham(u)\big)^{-1}_{\ell k, ab} \right|
            & \leq
                \Big|A^{-1}_{\ell k, ab}\Big| + 
                \left|
                    \big[A^{-1}U\big(I - VA^{-1}U\big)^{-1}VA^{-1}\big] _{\ell k, ab} 
                \right| \\
            &\leq
                \frac{4}{\Dr}
                e^{-\CT(\frac{1}{2}\D^{\rm ref}) r_{\ell k}} + 
                C(u) e^{-\frac{1}{2}\CT(\frac{1}{2}\D^{\rm ref})(|\ell| + |k|)}.
    \end{split}
\end{align}
\end{proof}

\begin{remark}
By directly applying Lemma~\ref{lem:CT_est_2}, we have
\begin{align}
    \label{eq:combes_improv}
    \left|\big(z - \Ham(u)\big)^{-1}_{\ell k, ab} \right|
        \leq &
            \left(
                \frac{4}{\Dr}e^{-\frac{1}{2}\CT(\frac{1}{2}\D^{\rm ref})r_{\ell k}} + C(u)e^{-\frac{1}{2}\CT(\frac{1}{2}\D^{\rm ref})
                	(|\ell| + |k| - r_{\ell k})}
            \right)
            e^{-\frac{1}{2}\CT(\frac{1}{2}\D^{\rm ref})r_{\ell k}}
            \nonumber \\
        := &
            C_{\ell k}(u)
            e^{-\frac{1}{2}\CT(\frac{1}{2}\D^{\rm ref})r_{\ell k}},
\end{align}
where $C_{\ell k}(u)$ satisfies $C_{\ell k}(u) \lesssim \frac{1}{\Dr}$ as $|\ell| + |k| - r_{\ell k} \to \infty$.
Additionally, we can always choose $\cc$ in the above estimation satisfying 
\[
    \dist(\cc, \sigma(\Hr))\geq \frac{1}{2}\gap,
\] 
so that the above estimate can be written in terms of $\gap$ and $\CT(\frac{1}{2}\gap)$.
\end{remark}

\vspace{0.3cm}
With the above preparation, we now proceed to the proof of \Cref{thm:prtb}:
\begin{proof}[Proof of \Cref{thm:prtb}]
Notice that the following inequality always holds 
\begin{align}
    \label{eq:tri_ineqt}
    \big|\dmu_{\ell k, ab} \big|
    \leq
    \big|\dmr_{\ell k, ab}\big|  
    +  \big|\dmu_{\ell k, ab}  - \dmr_{\ell k, ab} \big|
    \quad \forall \ell, k \in \Lambda, 
    1 \leq a,b \leq \nob.
\end{align}
To obtain the result, it suffices to only estimate the last term in \eqref{eq:tri_ineqt}, for the control of the first term has been given by \Cref{thm:nearsight}.
Applying the finite rank update \eqref{eq: finite_update}, 
the difference between $\dmu$ and $\dmr$ is bounded by
\begin{align}
\label{eq:dm_pertb_diff}
        \quad \left|\dmu_{\ell k, ab}  - \dmr_{\ell k, ab}\right|
        &= \left|\oint_{\cc} \Big[\big(z - \Ham(u)\big)^{-1}- \big(z - \Hr\big)^{-1}\Big]_{\ell k, ab} \frac{\dd z}{2\pi i }\right| \nonumber\\
        &= \left| \oint_{\cc} \left[ \big(z - \Ham(u)\big)^{-1}\big(\Ham(u) - \Hr\big)\big(z - \Hr\big)^{-1}\right]_{\ell k, ab}
        \frac{\dd z}{2\pi i } \right|\nonumber\\
        & \leq \sum_{mn} \oint_{\mathscr C}\Big| (z - \Ham(u))^{-1}_{\ell m} ( P_\delta(u) + P_{\rm loc}(u) )_{mn} (z - \Ham^{\rm ref})^{-1}_{nk}  \Big|       \frac{\dd z}{2\pi i }
        \nonumber\\
        & \leq C\sum_{mn} \left| (z - \Ham(u))^{-1}_{\ell m} \right|
        \left(\left|
         P_\delta(u)_{mn} \right| +    \left|P_{\rm loc}(u) )_{mn}
        \right|
        \right)
        \left|(z - \Ham^{\rm ref})^{-1}_{nk} \right|  
        \nonumber\\
        &\eqqcolon C\left(T_{\delta} + T_{\rm loc}\right).
\end{align}
Recall that $[P_{\loc}(u)]^{ab}_{\ell k} = [\Ham(u) - \Ham^{\rm ref}]_{\ell k}^{ab}$ if $(\ell, k) \in B_{R_{\delta}} \times B_{R_{\delta}}$, and $[P_{\rm loc}]_{\ell k}^{ab} = 0$ if $\ell$ or $k \notin B_{R_{\delta}}$.
Here, 
 we simplify $R_{\delta}(= R_1 + R_2)$ as $R$ in \eqref{eq:p_delta}, by first choosing $R_1$ and then taking $R_2$ sufficiently large. Using \eqref{eq:ham_diff_mid} together with H\"older's inequality, we obtain
\begin{align}
\label{eq:T_delta}
    &T_\delta
        \coloneqq \sum_{mn} \left| (z - \Ham(u))^{-1}_{\ell m} \right|
        \left|
             P_\delta(u)_{mn} 
        \right| 
        \left|(z - \Ham^{\rm ref})^{-1}_{nk} \right| \nonumber\\ 
        &\leq
            C_{\ell k}(u)
            \sum_{m, n \in \Lambda\setminus B_R}
             e^{-\frac{1}{2}\CT(\frac{1}{2}\D^{\rm ref}) r_{\ell m}} 
            \Big(
                e^{-\frac{\sqrt{3}}{2}\gamma_0\m r_{mn}} |u(m) - u(n)|
            \Big)
            e^{-\CT(\D^{\rm ref})r_{nk}} \nonumber\\
         &\leq
            C_{\ell k}(u)
            \Big(
                \sum_{m, n \in \Lambda \backslash B_R} e^{-2 \Upsilon r_{m n}}|u(m)-u(n)|^2
            \Big)^{\frac{1}{2}} 
            \Big(
                \sum_{m, n \in \Lambda\setminus B_R}
                e^{-\CT(\frac{1}{2}\D^{\rm ref}) r_{\ell m}} 
                e^{-(\sqrt{3}\gamma_0\m - 2\Upsilon) r_{mn}}
                e^{-2\CT(\D^{\rm ref})r_{nk}}
            \Big)^{\frac{1}{2}} 
            \nonumber\\
        &\leq
             C_{\ell k}(u)\|Du\|_{\ell^2_{\Upsilon}(\Lambda\backslash B_R)}
             \Big(
                \sum_{m, n \in \Lambda\setminus B_R}
                e^{-\CT(\frac{1}{2}\D^{\rm ref}) r_{\ell m}} 
                e^{-\frac{\sqrt{3}}{2}\gamma_0\m r_{mn}}
                e^{-2\CT(\D^{\rm ref})r_{nk}}
            \Big)^{\frac{1}{2}} 
            \nonumber\\
        &\leq
             C_{\ell k}(u)\|Du\|_{\ell^2_{\Upsilon}(\Lambda\backslash B_R)} 
             e^{-\etaM r_{\ell k}},
\end{align}
where we denote
$
    \etaM :=
    \frac{1}{4}\min\{\CT(\tfrac{1}{2}\Dr), \gamma_0\m \} 
    = c_2\frac{\gamma_0}{h_0} \min\{h_0, \gamma_0^d\Dr\} 
$
with $c_2>0$ depending only on $d, \m, M$ and $\nob$.
Furthermore, by combining the exponential sum  estimate \eqref{eq:expsum-2} with Combes--Thomas estimate \eqref{eq:CombesThomas} and its improved form \eqref{eq:combes_improv}, we obtain
\begin{align}
\label{eq:T_loc}
 T_{\rm loc} 
    & \coloneqq \sum_{mn} \left| (z - \Ham(u))^{-1}_{\ell m} \right|
        \big|
             P_{\rm loc}(u)_{mn} 
        \big| 
        \left|(z - \Ham^{\rm ref})^{-1}_{nk} \right| 
        \nonumber\\
    & \leq
         C_{\ell k}(u)
         \|Du\|_{\ell^2_{\Upsilon}(B_{R})}
         \left(
            \sum_{m \in B_{R}} 
            e^{-\frac{1}{2}\CT(\frac{1}{2}\D^{\rm ref})  r_{\ell m}}
         \right)
         \left(
            \sum_{n \in B_{R}} e^{-\CT(\D^{\rm ref}) r_{nk}}
         \right)
         \nonumber\\
    & \leq
        C_{\ell k}(u) 
        \|Du\|_{\ell^2_{\Upsilon}(B_{R})}
        e^{\etaM R}
        e^{-\etaM (|\ell| + |k|)}
\end{align}
From Lemma~\ref{lem:expsum}, we note that the prefactors in \eqref{eq:T_delta} and \eqref{eq:T_loc} are both independent of $\etaM$ as $\gap$ is small. Moreover, by choosing $\cc$ satisfying $\dist(\cc, \sigma(\Hr))\geq \frac{1}{2}\gap$, we can take $\gap$ in $\etaM$.
Substituting \eqref{eq:T_delta} and \eqref{eq:T_loc} into \eqref{eq:dm_pertb_diff} draws the conclusion that 
\begin{align}
\label{eq:dm_diff}
    \big|\dmu_{\ell k, ab}  - \dmr_{\ell k, ab}\big|
        &\leq C(T_\delta + T_{\rm loc} ) 
            \nonumber\\
        &\hspace{-1.0cm}\leq
            C_{\ell k}(u)
            \Big(
                \|Du\|_{\ell^2_{\Upsilon}(\Lambda\backslash B_{R})}
                +
                \|Du\|_{\ell^2_{\Upsilon}(B_{R})}
                e^{\etaM R}e^{-\etaM(|\ell| + |k| - r_{\ell k})}
            \Big)
            e^{-\etaM r_{\ell k}}
            \nonumber\\
        &\hspace{-1.0cm}
        \coloneqq \tilde{C}_{\ell k}(u) e^{-\etaM r_{\ell k}}.
\end{align}
Notice that $R = R_1 + R_2$, and that the factor $C_{\ell k}(u)$ defined in \eqref{eq:cu} depends only on $R_1$. As $|\ell| + |k| - r_{\ell k} \to \infty$, we have $\tilde{C}_{\ell k}(u) \lesssim C\|Du\|_{\ell^2_{\Upsilon}(\Lambda \setminus B_{R})}$. Moreover, as $\|Du\|_{\ell^2_{\Upsilon}} \to 0$, taking the limit $R_2 \to \infty$ yields $\tilde{C}_{\ell k}(u) \to 0$. Finally, substituting \eqref{eq:dm_diff} into \eqref{eq:tri_ineqt} completes the proof.
\end{proof}

\subsection{Proof of the Theorem~\ref{thm: sharp_estimate}}
\label{proof: sqrt_pf}

Under assumption $\asReg$, 
we apply the mean value theorem to the bands $\ep_n(\bxi+i\bzeta), ~n \in\{ N_0, N_0 + 1\}$. Noting that $\nabla \ep_n(\bxi)$ is real, the first-order term $\re(i\bzeta^{\top}\nabla \ep_n(\bxi))$ vanishes, yielding: for all $\bm \xi, \bm \zeta$ such that $\bm \xi + i \bm \zeta \in \C_{\reg}^d$, we have
\begin{align}
	\label{eq: analy_exp}
	\re\ep_{n}(\bxi + i \bzeta) - \ep_{n}(\bxi) 
    & = 
    - \frac{1}{2} \bzeta^{\top} \nabla^2 \ep_n (\bxi') \bzeta,
\end{align}
for some $\bxi' \in [\bxi,\bxi +  i\bzeta]$.
First, we estimate the norm of the Hamiltonian in the complex domain.
For any $\bxi + i \bzeta \in \C_{\reg}^d$, we have 
\begin{align}
    \begin{split}
        \left\|\Ham_{\bxi + i\bzeta}\right\|_{\mathrm{F}}^2
        = & \sum_{\substack{\ell_0, k_0 \in \Lambda_0\\1\leq a,b \leq \nob }} \left|\big[\Ham_{\bxi + i\bzeta}\big]_{\ell_0 k_0}^{ab} \right|^2
        \leq 
        (\nob h_0)^2\sum_{\substack{\ell_0, k_0 \in \Lambda_0 }} \left(\sum_{\alpha\in\Z^d} e^{-\gamma_0|\ell_0 - k_0 +\mathsf{A}\alpha|} e^{|\ell_0 - k_0 +\mathsf{A}\alpha||\bzeta|} \right)^2 \\
        \leq &
        (M \nob h_0)^2\sup_{\substack{\ell_0, k_0 \in \Lambda_0}}\left(\sum_{\alpha\in\Z^d} e^{-(\gamma_0 - |\bzeta| ) |\ell_0 - k_0 +\mathsf{A}\alpha|}
         \right)^2\\
        \leq &
        \left(C_{\m, d} M \nob h_0(\gamma_0 - |\bzeta|)^{-d}\right)^2.
    \end{split}
\end{align}
By choosing an appropriate radius specifically for $|\bzeta| \leq \tfrac{1}{2} \min\{\gamma_0, \reg\}$,  we have $ \left\|\Ham_{\bxi + i\bzeta}\right\|_{\mathrm{F}}<\infty$. 
Furthermore, 
from \eqref{eq: analy_exp}, for $n \in \{N_0, N_0 + 1\}$, it follows that
\begin{align}
    \label{eq: real_ptb}
    	|\re\ep_{n}(\bxi + i \bzeta) - \ep_{n}(\bxi) | 
        \leq
        \frac{1}{2} \sup_{{\bm \omega} \in \C_{\reg/2}^d \cap \C_{\gamma_0/2}^d}
 		\|\nabla^2\ep_{n}({\bm \omega})\|_{\rm F}  |\bzeta|^2    
  		\leq 
  		m_{\infty}|\bzeta|^2. 		
\end{align}  

For any $\bxi, \bzeta \in \R^d$ and $\ell, k \in \Lambda$, we denote 
$\begin{psmallmatrix}
    \bm \xi \\ \bm \zeta 
\end{psmallmatrix} : \begin{psmallmatrix}
    \ell \\ k
\end{psmallmatrix} \coloneqq \bm \xi \cdot \ell + \bm \zeta \cdot k$
for brevity. 
Following the proof of Theorem~\ref{thm:nearsight}, we 
choose a suitable contour $\cc \supset i[-R, R]$ for some $R > 0 $ to be determined later, and write
\begin{align}
\label{eq: deriv_contour_integral}
\frac{\partial\rho_{\ell\ell}}{\partial [\bm r_k]_{j}}
& \coloneqq
\frac{\partial\big[\dmu\big]_{\ell\ell}}{\partial [u(k)]_{j}}\Bigg|_{u = \mathbf{0}} \nonumber\\
    &= 
    \oint_{\cc+\ep_{\rm F}} \left[\big(z - \Ham^{\rm ref}\big)^{-1} \,             \frac{\partial \Ham(u)}{\partial [u(k)]_{j}}\bigg|_{u = \mathbf{0}}\, \big(z - \Ham^{\rm ref}\big)^{-1} \right]_{\ell\ell}
    \frac{\dd z}{2\pi i } \nonumber\\
    &= \sum_{mn}
        \frac
            {\partial \Ham_{mn}}
            {\partial [\bm r_k]_{j}}
        \dashint_{(\Gamma^*)^2} 
            \left[ \oint_{\cc + \nu(\bxi_1, \bxi_2)}
                \big(z - \Ham_{\bxi_1}\big)^{-1}_{\ell_0 m_0} 
                \big(z - \Ham_{\bxi_2}\big)^{-1}_{n_0\ell_0}
                \frac{\dd z}{2\pi i }
            \right]
            e^{i \binom{\bxi_1}{\bxi_2} : \binom{\ell - m}{n - \ell}}
                \dd\bxi_1 \dd\bxi_2  \nonumber\\
    &\eqqcolon \sum_{mn}
        \frac
            {\partial \Ham_{mn}}
            {\partial [\bm r_k]_{j}}
        \dashint_{(\Gamma^*)^2} 
            \widehat{G}(\bxi_1, \bxi_2)_{\ell_0m_0,n_0\ell_0}
                e^{i \binom{\bxi_1}{\bxi_2} :  \binom{\ell - m}{n - \ell}}
                \dd\bxi_1 \dd\bxi_2 \nonumber\\
    &\eqqcolon \sum_{mn}
        \frac
            {\partial \Ham_{mn}}
            {\partial [\bm r_k]_{j}}
        \, G_{\ell m, n\ell}, 
\end{align}
For any $1\leq a, b \leq \nob$, we write $\frac{\partial\rho_{\ell\ell, ab}}{\partial [\bm r_k]_{j}} = \sum_{mn, a'b'}
        \frac
            {\partial \Ham_{mn, a'b'}}
            {\partial [\bm r_k]_{j}}
        \, G_{\ell m, n\ell}^{aa', bb'}$.
The decay properties of \eqref{eq: deriv_contour_integral} relies on the analyticity of the product function defined by
\begin{align}
\label{eq: res_time_res}
\widehat{G}(\bxi_1, \bxi_2)
    \coloneqq
        \oint_{\cc + \nu(\bxi_1, \bxi_2)}
            \big(z - \Ham_{\bxi_1}\big)^{-1}
            \otimes
            \big(z - \Ham_{\bxi_2}\big)^{-1}
            \frac{\dd z}{2\pi i }, 
        \quad
        (\bxi_1, \bxi_2) \in (\Gamma^*)^2,
\end{align}
where the contour shift satisfies $\max\{\ep_{N_0}(\bxi_1), \ep_{N_0}(\bxi_2)\} <\nu(\bxi_1, \bxi_2) < \min\{\ep_{N_0 + 1}(\bxi_1), \ep_{N_0 + 1}(\bxi_2)\}$
for all $(\bxi_1, \bxi_2) \in (\Gamma^*)^2$.
Consequently, the analyticity of $\widehat{G}$ depends on the regularity of the mapping $\nu: (\Gamma^*)^2 \to \C$. Analogously to the proof of \Cref{thm:nearsight}, our goal is to construct a continuous shift $\nu$ and identify its domain of analyticity.

First, as in the proof of \Cref{thm:nearsight}, we define
\begin{align} 
    \label{eq: sqrt_D_R}
    D_{\rm R} 
        \coloneqq
            \left\{ 
                \bxi + i\bzeta \colon \bxi \in \Gamma^*, |\bzeta| \leq \min\Big\{\tfrac{1}{2}\gamma_0,\tfrac{1}{2}\reg, \sqrt{\tfrac{\alpha}{m_{\infty}} \gap} \Big\}
            \right\}
\end{align}
with $\alpha := \frac14$. For ${\bm \omega_l} \in D_{\rm R}$ for $l=1,2$, we define
\begin{align*}
    \nu_0({\bm \omega_1} , {\bm \omega_2})                      \coloneqq 
            \frac{\min_{\bxi_1, \bxi_2}\ep_{N_0 + 1} + \max_{\bxi_1, \bxi_2}\ep_{N_0}}{2} \in \R
\end{align*}
and construct a polynomial approximation $\nu$ satisfying 
\begin{align}
    |\re \nu({\bm \omega_1} , {\bm \omega_2}) - \nu_0({\bm \omega_1} , {\bm \omega_2}) | \leq \frac{\alpha}{2}\gap,
    \quad
    ({\bm \omega_1} , {\bm \omega_2}) \in (D_{\rm R})^2.
\end{align}
In the exact same way as in \eqref{eq:distz_nu}, we have 
\begin{align*}
    {\rm dist}\left(\re\nu({\bm \omega}_1, {\bm \omega}_2), \re \sigma(\Ham_{{\bm \omega}_1}) \cup \re \sigma(\Ham_{{\bm \omega}_2}) \right)
        \geq
            \frac{1}{4}\gap,
\end{align*}
for all ${\bm \omega}_1, \bm \omega_2 \in D_{\rm R}$. In particular, $\left( \cc + \nu({\bm \omega}_1, {\bm \omega}_2)\right) \cap [\sigma(\Ham_{{\bm \omega}_1}) \cup \sigma(\Ham_{{\bm \omega}_2})] = \emptyset $ for all $\bm \omega_1, \bm \omega_2 \in D_{\rm R}$.
As a result, the function 
 $\widehat{G}$ \eqref{eq: res_time_res} is analytic on $(D_{\rm R})^2$. 

Let us define the decay rate $\widetilde{\eta}_- \coloneqq \min\big\{\tfrac{1}{2}\gamma_0, \tfrac{1}{2}\reg, \sqrt{\tfrac{\alpha}{m_{\infty}} \gap}\big\}$. Then, for any ${\bm \omega}_l = \bm \xi_l + i \bm \zeta_l$ with  $|\bzeta_l|\leq \widetilde{\eta}_-$, $l = 1,2$, we have
\begin{align*}
    \tilde{C}^2\coloneqq \int_{(\Gamma^*)^2} \big|\widehat{G}(\bm \omega_1,  \bm \omega_2)_{\ell_0 m_0, n_0\ell_0}^{aa', bb'}\big|^2 \dd \bxi_1 \dd \bxi_2
        = \sum_{mn} \left|G_{\ell m, n\ell}^{aa', bb'}\right|^2 e^{2\,\bzeta_1 \cdot (\ell - m) + 2 \,\bzeta_2 \cdot (n - \ell)}  < \infty,
\end{align*}
Note that $\big|\widehat{G}(\bm \omega_1,  \bm \omega_2)_{\ell_0 m_0, n_0\ell_0}^{aa', bb'}\big| \leq (\gap)^{-2}$ for any $(\bm \omega_1, \bm \omega_2) \in (D_{\rm R})^2$, so 
$|G_{\ell m, n\ell}^{aa', bb'}| \leq \frac{\tilde{C}}{\gap^2} e^{-\widetilde{\eta}_-(r_{\ell m} + r_{n \ell})}$.
Combining this with \eqref{eq: deriv_contour_integral}, we obtain 
\begin{align*}
    \left|
           \frac{\partial\rho_{\ell\ell, ab}}{\partial [\bm r_k]_{j}}
    \right| 
            &\leq
            \frac{1}{2\pi}\ell(\cc)h_0(M\nob)^2 \frac{\tilde{C}}{\gap^2}
            \sum_{mn}
            \left|\frac
                {\partial \Ham_{mn}^{a'b'}}
                {\partial [\bm r_k]_{j}}\right| 
            \left|G_{\ell m, n\ell}^{aa', bb'}\right|
            \nonumber\\
            &\leq 
            \frac{1}{2\pi}\ell(\cc)h_0(M\nob)^2 \frac{\tilde{C}}{\gap^2}
            \sum_{mn}
            e^{-\gamma_0 (r_{mk} + r_{kn})}e^{-\widetilde{\eta}_- (r_{\ell m} + r_{\ell n})}
            \nonumber\\
            &\leq
            \frac{\tilde{C}}{\gap^2} e^{-\widetilde{\eta}_-r_{\ell k}},
\end{align*}
where the constant $\tilde{C} > 0$ depends on $h_0, \gamma_0, d, \m, M$, and $\nob$. 

For the nearsightedness estimate \eqref{eq:shaper_nearsight}, define $\widetilde{\eta}_+ \coloneqq \min\big\{\tfrac{1}{2}\gamma_0, \tfrac{1}{2}\reg, \sqrt{\tfrac{\alpha}{m_{\infty}} \Gap}\big\}$. Together with the perturbation estimate \eqref{eq: real_ptb}, the same argument as in the proof of Theorem~\ref{thm:nearsight} (with $\etaP$ replaced by $\widetilde{\eta}_+$), shows that \eqref{eq:distz_nu} holds for all $|\bxi| < \widetilde{\eta}_+$ and hence \eqref{eq:shaper_nearsight} follows. 

\appendix
\section{Bloch Transform for the Tight Binding Hamiltonian}
\label{appd:bloch_trans}
Recall that $\Gamma, \Gamma^*$ are unit cells of $\Lambda$ and the corresponding reciprocal lattice, respectively. Let us denote all the lattice sites in the unit cell by $\Lambda_0 = \Gamma \cap \Lambda$, and denote the set of orbitals by $\Xi$.

Given an atom site $k \in \Lambda$, there exits a vector $\alpha_{k} \in \Z^d$ such that $k = k_0 + \A\alpha_{k}$, $k_0 \in \Lambda_0$. Then, we define the operator $U: \ell^2(\Lambda \times \Xi) \rightarrow L^2\left(\Gamma^*, \ell^2(\Lambda_0 \times \Xi)\right)$ as
\begin{align*}
    \begin{split}
        (U\psi)_{\bxi}(k; a) := \hat{\psi}_{\bxi}(k_0;a)=\sum_{\alpha \in \mathbb Z^d}\psi\big(k_0 + \A\alpha; a\big)e^{-i\bxi \cdot (k_0 + \A\alpha)}, 
    \end{split} 
\end{align*}
where $\bxi \in \Gamma^*$, $k_0 \in \Lambda_0$, and $1 \leq a \leq \nob$. The space $L^2\left(\Gamma^*, \ell^2(\Lambda_0 \times \Xi)\right)$ is a Hilbert space endowed with the inner product
\begin{align*}
\langle \hat{\psi}, \hat{\phi}\rangle_{L^2\left(\Gamma^*, \ell^2(\Lambda_0 \times \Xi)\right)} \coloneqq \dashint_{\Gamma^*} \langle \hat{\psi}_{\bxi}(\cdot), \hat{\phi}_{\bxi}(\cdot)\rangle_{\ell^2(\Lambda_0 \times \Xi)}\dd\bxi = \sum_{\substack{k_0 \in \Lambda_0, a \in \Xi}}\dashint_{\Gamma^*}\hat{\psi}^*_{\bxi} (k_0;a)\hat{\phi}_{\bxi}(k_0;a)\dd\bxi.
\end{align*}
It is straightforward to verify that $U$ is an isometry with inverse $U^{*}:L^2\left(\Gamma^*, \ell^2(\Lambda_0 \times \Xi)\right) \rightarrow \ell^2(\Lambda \times \Xi)$ defined by 
\begin{align*}
    \begin{split}
        (U^{*}\hat{\psi}_{\bxi})(k_0 + \A\alpha):= \dashint_{\Gamma^*}\hat{\psi}_{\bxi}(k_0;a)e^{i\bxi\cdot (k_0 + \A\alpha)}\dd \bxi,
    \end{split}
\end{align*}
for $k_0 \in \Lambda_0$ and $\alpha\in\mathbb Z^d$.

The reference Hamiltonian $\Hr$ we defined in \eqref{eq:ham} is a linear self-adjoint operator on $\ell^2(\Lambda \times \Xi)$. Following \cite{ weinan2010electronic, reed1978iv}, there exits a direct integral decomposition of $\Hr$ given by
\begin{align*}
    U \Hr U^{*} = \dashint_{\Gamma^*}^{\oplus} \Hxi \dd\bxi, 
\end{align*}
where $\dashint_{\Gamma^*}^{\oplus}$ denotes the direct integral and $\Hxi$ satisfies
$(U\Hr\psi)_{\bxi} = \Hxi\hat{\psi}_{\bxi}$
 for almost every $\bxi \in \Gamma^*$. 
Besides, we have
\begin{align*}
    [\Hxi]_{\ell_0 k_0}^{ab} \coloneqq \sum_{\alpha\in \mathbb Z^d} [\Hr]^{ab}_{\ell_0 + \mathsf{A}\alpha, k_0} e^{- i\bxi \cdot (\ell_0 - k_0 +\mathsf{A}\alpha) },
    \quad
    \forall \bxi \in \Gamma^*, \,\ell_0, k_0 \in \Lambda_0,
\end{align*}
where $\Hxi$ is a $\na\nob\times \na\nob$ matrix.

For any $\bxi \in \Gamma^*$, by solving the eigenvalue problem
\begin{align}
\label{eq:bloch_eigenpair}
   \Hxi u_{n}(\bxi) = \ep_{n}(\bxi)u_{n}(\bxi), \quad n = 1, \dots, \na\nob, 
\end{align}
we get the spectrum $\{\ep_{n}(\bxi)\}_{n=1}^{M\nob}$, where the ordered eigenvalues $\ep_{1}(\bxi) \leq \cdots \leq \ep_{M\nob}(\bxi)$ are continuous on $\Gamma^*$ \cite{des1964analytical, kohn1959analytic}. Finally, we obtain the band structure:
\begin{align*}
     \sigma(\Hr) = \bigcup_{j=1}^{M\nob}\left\{\ep_{n}(\bxi):  \bxi \in \Gamma^*\right\}.
\end{align*}

\section{Computational Details in 1D Chain}
\label{appd: 1d_chain}
Here we will provide the computational details of 1D chain tests.
The Bloch transform gives that
\begin{align*}
    \Ham_{\xi} = 
        {\bm c} + 2{\bm f}(1)\cos\xi,
        \quad
        \forall\xi\in \Gamma^* = [-\pi, \pi].
\end{align*}
We solve the eigenvalue problem of $\Ham_{\xi}$ by considering the equation 
$
    \det\big|\ep(\xi) - \Ham_{\xi}\big| = 0,
$
which yields two energy bands denoted by $\ep_\pm(\xi)$.

In particular, for the model of type (i) where $c_{12} = c_{21} = 0$, we have
\begin{align*}
\begin{split}
    \ep_\pm(\xi)
    &=\frac{c_{11}+c_{22}}{2} + \big(f_{11} + f_{22}\big) \cos\xi \pm\sqrt{\frac{\big(c_{11}-c_{22} + 2(f_{11} - f_{22} \cos\xi\big)^2}{4}+4\left(f_{12} \cos \xi\right)^2},
\end{split}
\end{align*}
where we denote $f_{ij} \coloneqq f_{ij}(1)$ for $1 \leq i, j \leq 2$.
Then $\Gap$ can be given immediately by
\begin{align}
\label{eq:1d_Gap}
    \mathsf{gap}_{+} 
    &= \min_{\xi \in \Gamma^*} \big(\ep_+(\xi) - \ep_-(\xi)\big) 
     = \min_{\xi \in \Gamma^*} \sqrt{{\big(c_{11}-c_{22} + 2(f_{11} - f_{22} \cos\xi)\big)^2}+16\left(f_{12} \cos \xi\right)^2} \nonumber\\
    &= \frac{2 f_{12}\left|c_{11}-c_{22}\right|}{\sqrt{\big(f_{11}-f_{22}\big)^2+ 4 f_{12}^2}}.
\end{align}
Additionally, for the model of type (ii) where $f_{12} = f_{21} = 0$, the associated energy bands are given by
\begin{align*}
\begin{split}
    \ep_\pm(\xi)
    =\frac{c_{11}+c_{22}}{2} + \big(f_{11} + f_{22}\big) \cos\xi \pm\sqrt{\frac{\big(c_{11}-c_{22} + 2(f_{11} - f_{22} \cos\xi)\big)^2}{4}+4c_{12}^2},
\end{split}
\end{align*}
where the $\Gap$ is calculated by 
\begin{align*}
    \Gap 
     = \min_{\xi \in \Gamma^*} \big(\ep_+(\xi) - \ep_-(\xi)\big)
    = \sqrt{\big(c_{11} - c_{22} \pm 2 (f_{11} - f_{22})\big)^2 + 16 c_{12}^2}.
\end{align*}

Next, we describe the parameters choices in our numerical experiments. For the type (i) model, we set $c_{11} = 1 + 1/2\Gap, c_{22} = 1 - 1/2\Gap$ and
\begin{align*}
    f_{11} &= (0.5 + 0.5 r)\eta(r),~
    f_{22} = (0.3 + 0.7 r)\eta(r), \\
    f_{13} &= (0.4 + (p - 0.4)r)\eta(r),
\end{align*}
where $p = \frac{1}{2}\sqrt{(\frac{1}{2}\gap + 2)^2 - \frac{1}{4}\Gap}$ and $\eta(r) = e^{1-1/(1 - |r-1|^2)}$ as $r \in (1, 2)$ which smoothly connects $\eta = 1$ for $r \in [0, 1]$ and $\eta = 0$ for $r \in [2, \infty)$. Under this parametrization, the band functions take the form
\begin{align*}
    \ep_{\pm}(\xi)
        = (1 + 2\cos\xi) \pm \frac{1
        }{2}\sqrt{\Gap^2 + [(\gap + 4)^2 - \Gap^2]\cos^2\xi }.
\end{align*}
This allows us to design experiments with either fixed $\gap$ or fixed $\Gap$. Moreover, as $\Gap$ and $\gap$ tend to zero, we obtain the limiting band behaviour
$
\ep_\pm(\xi) \approx 1 + 2(\cos\xi \pm |\cos\xi|)
$
which corresponds to the case shown in  \Cref{fig:1d_bands_type11} .

For the second type, we set $c_{11} = -1, c_{22} = 5, c_{12} = c_{21} = 1/2\Gap, f_{11}(r) = (1 + r)\eta(r)$ and $f_{22}(r) = (-2 + r)\eta(r)$, which yields cosine-like bands 
\begin{align*}
    \ep_{\pm}(\xi)
        = (2 + \cos\xi) \pm \frac{1
        }{2}\sqrt{\big(-6 +  6\cos\xi \big)^2 + \Gap^2}.
\end{align*}
As $\Gap \to 0 $, we observe that
$\ep_+(\xi) \approx -1 + 4\cos\xi$, and $\ep_-(\xi) \approx 5 - 2\cos\xi$
which corresponds to the behavior illustrated in \Cref{fig:1d_bands_type2}.

\begin{remark}
    For the type (i), an alternative definition of  $\Gap$ could be used:
    \[
    \Gap = \big[c_{11} - c_{22} \pm 2 (f_{11} - f_{22}) \big]^2 + 16 f_{12}^2,
    \]
    since the minima of $\ep_+(\xi) - \ep_-(\xi)$ occur at $0$ or $\pm \pi$.
    However, in our tests, the chosen parameters yield $\Gap$ as defined by \eqref{eq:1d_Gap}.
\end{remark}

\section{Exponential Sums}
Throughout \Cref{sec:Proofs of the Main Results}, we frequently apply estimates for exponential sums. For the convenience of the reader, we present these elementary estimates here.

\begin{lemma}
\label{lem:expsum}
For $\gamma \geq \eta > 0$, $\ell, k \in \Lambda$, and $A \subseteq \Lambda$, 
\begin{align}
    \sum_{m \in A} 
    e^{-\gamma r_{\ell m}} e^{-\eta r_{mk}}
    &\leq
    \left[ 1 + \frac
        {C_d}
        {(\m\gamma)^d}
    \right]
    e^{ -\frac{1}{2} \min\limits_{m \in A}[\gamma r_{\ell m} + \eta r_{mk} ]}  \label{eq:expsum-1}\\
    &\leq 
    \left[ 1 + \frac
        {C_d}
        {(\m\gamma)^d}
    \right]
    e^{ \gamma \sup\limits_{a\in A} |a| }
    e^{ -\frac{1}{2} (\gamma |\ell| + \eta |k| )}
    \label{eq:expsum-2}
\end{align}
for some $C_d$ depending only on $d$.
\end{lemma}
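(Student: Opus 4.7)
The plan is to split the exponent symmetrically and factor out the minimum. Writing $\gamma r_{\ell m} + \eta r_{mk} = \tfrac{1}{2}(\gamma r_{\ell m} + \eta r_{mk}) + \tfrac{1}{2}(\gamma r_{\ell m} + \eta r_{mk})$, I would pull $e^{-\frac{1}{2}\min_{m\in A}[\gamma r_{\ell m} + \eta r_{mk}]}$ out of the sum. What remains is
\[
\sum_{m \in A} e^{-\frac{1}{2}(\gamma r_{\ell m} + \eta r_{mk})} \leq \sum_{m \in \Lambda} e^{-\frac{1}{2}\gamma r_{\ell m}},
\]
where I have enlarged $A \subseteq \Lambda$ and used $\eta r_{mk} \geq 0$ to drop the second term, arriving at a sum that depends only on the faster rate $\gamma$.

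The key lattice estimate I would then invoke is that, since $\Lambda$ has minimum separation $\m$ (cf.\ the admissibility condition in \eqref{eq:adm_configs}), one has for any $\alpha > 0$
\[
\sum_{m \in \Lambda} e^{-\alpha |m - \ell|} \leq 1 + \frac{C_d}{(\m \alpha)^d},
\]
which follows by grouping the lattice into annular shells $\{m \in \Lambda \colon |m-\ell| \in [j\m,(j+1)\m)\}$ for $j \in \mathbb{N}$, each containing at most $C_d (j+1)^{d-1}$ points, and summing the geometric--polynomial series (the additive $1$ corresponds to the term $m = \ell$, where $r_{\ell m}=0$). Applied with $\alpha = \gamma/2$, and absorbing factors of $2^d$ into $C_d$, this gives \eqref{eq:expsum-1}.

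For \eqref{eq:expsum-2}, I would use the reverse triangle inequalities $r_{\ell m} \geq |\ell| - |m|$ and $r_{mk} \geq |k| - |m|$. Combined with $\gamma \geq \eta$, this yields, for every $m \in A$,
\[
\gamma r_{\ell m} + \eta r_{mk} \;\geq\; \gamma |\ell| + \eta |k| - (\gamma + \eta)|m| \;\geq\; \gamma |\ell| + \eta |k| - 2\gamma \sup_{a \in A}|a|,
\]
and taking the minimum over $m \in A$ and substituting into \eqref{eq:expsum-1} completes the proof. There is no real obstacle here; the only points to be careful about are the additive $1$ in the prefactor (which comes from the on-site term and cannot be discarded when $\gamma \m$ is large) and the asymmetric use of $\gamma \geq \eta$ at two different places, namely when bounding the sum by the faster-decaying term and when combining the reverse triangle inequalities into a single sup.
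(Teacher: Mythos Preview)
Your proposal is correct and follows essentially the same approach as the paper: split the exponent in half, factor out the minimum, bound the remaining sum by $\sum_{m\in\Lambda} e^{-\frac{1}{2}\gamma r_{\ell m}}$, and then use the reverse triangle inequality for \eqref{eq:expsum-2}. The only cosmetic difference is that the paper bounds the lattice sum $\sum_{m\neq\ell} e^{-\frac{1}{2}\gamma r_{\ell m}}$ via an integral comparison (using that the balls $B_{\m/2}(\br_m)$ are disjoint) rather than your annular-shell counting argument, but both are standard and yield the same $C_d/(\m\gamma)^d$ bound.
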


We apply Lemma~\ref{lem:expsum} with $A = \Lambda$ and replace the exponent in the first line with $\eta r_{\ell k} \leq \min\limits_{m\in \Lambda}[\gamma r_{\ell m} + \eta r_{mk}]$. When $A = \Lambda \cap B_R$, we may apply the second line with $\sup\limits_{a\in A}|a| = R$.

\begin{proof} Here, we follow \cite{thomas2021analysis}. Since $\gamma r_{\ell m } + \eta r_{mk}\geq \min\limits_{m \in A}[\gamma r_{\ell m} + \eta r_{mk} ]$ and $\gamma r_{\ell m } + \eta r_{mk}\geq \gamma r_{\ell m}$, we have
\begin{align}
\label{eq:exp_sum_1}
    \sum_{m \in A} e^{-\gamma r_{\ell m}} e^{-\eta r_{mk}} 
        &\leq 
            \Big(
                \sum_{m \in A} e^{-\frac{1}{2}\gamma r_{\ell m}}
            \Big) e^{ -\frac{1}{2} \min\limits_{m \in A}[\gamma r_{\ell m} + \eta r_{mk} ]} .
\end{align}
Additionally, by noting that $e^{-\gamma r_{\ell m }} \leq e^{-\gamma|\br - \br_{\ell}|}$ for all $\br \in B_{r_{\ell m}}(\br_{\ell})$, and the non-interpenetration condition in $\eqref{eq:adm_configs}$, 
we can approximate \cref{eq:exp_sum_1} with the following integral
\begin{align*}
\label{eq:int_sum}
    \sum_{m \neq \ell} e^{-\frac{1}{2}\gamma r_{\ell m}}
    &\leq
        \sum_{m \neq \ell} \dashint_{B_{r_{\ell m }} (\br_{\ell}) \cap B_{\m/2}(\br_{m})} e^{-\frac{1}{2}\gamma |\br - \br_{\ell}|}\dd\br
    \leq
        \frac{C_d}{\m^d}
        \int_{\R^d} e^{-\frac{1}{2}\gamma |\br - \br_{\ell}|}\dd\br
        \\
    &=\frac{C_d}{\m^d}\int_0^{\infty} e^{-\frac{1}{2}\gamma r} r^{d-1}\dd r
        = \frac{C_d}{(\m\gamma)^d},
\end{align*}
where the constant $C_d$ changes from one line to the next. Together with \eqref{eq:exp_sum_1}, this concludes the proof of \eqref{eq:expsum-1}.
We conclude \eqref{eq:expsum-2} by noting that 
\begin{align*}
    \frac{1}{2} \min\limits_{m \in A}[\gamma r_{\ell m} + \eta r_{mk} ]
    &\geq \frac{1}{2} \big[\gamma \dist(\ell, A) + \eta \dist(k, A)\big] \nonumber\\
    &\geq  \frac{1}{2} \big(\gamma |\ell| + \eta |k|\big)
    - \gamma \sup_{a\in A} |a|.
\end{align*}
\end{proof}

\bibliography{refs.bib}

@article{des1964analytical,
  title={Analytical properties of n-dimensional energy bands and {Wannier} functions},
  author={Cloizeaux, Jacques Des},
  journal={Phys. Rev.},
  volume={135},
  number={3A},
  pages={A698},
  year={1964},
  publisher={APS},
  doi = {https://doi.org/10.1103/PhysRev.135.A698},
}

@article{kohn1959analytic,
  title={Analytic properties of {Bloch} waves and {Wannier} functions},
  author={Kohn, Walter},
  journal={Phys. Rev.},
  volume={115},
  number={4},
  pages={809},
  year={1959},
  publisher={APS}, 
  doi = {https://doi.org/10.1103/PhysRev.115.809}
}

@article{mehl1996applications,
  title={Applications of a tight-binding total-energy method for transition and noble metals: {Elastic} constants, vacancies, and surfaces of monatomic metals},
  author={Mehl, Michael J and Papaconstantopoulos, Dimitrios A},
  journal={Phys. Rev. B},
  volume={54},
  number={7},
  pages={4519},
  year={1996},
  publisher={APS}, 
  doi = {https://doi.org/10.1103/PhysRevB.54.4519} 
}

@book{reed1978iv,
  title={IV: Analysis of {Operators}},
  author={Reed, Michael and Simon, Barry},
  volume={4},
  year={1978},
  publisher={Elsevier}
}

@article{chen2019adaptive,
  title={Adaptive {QM/MM} coupling for crystalline defects},
  author={Chen, Huajie and Liao, Mingjie and Wang, Hao and Wang, Yangshuai and Zhang, Lei},
  journal={	Comput. Methods Appl. Mech. Eng.},
  volume={354},
  pages={351--368},
  year={2019},
  publisher={Elsevier}, 
  doi = {https://doi.org/10.1016/j.cma.2019.04.020}
}

@phdthesis{thomas2021analysis,
  title={Analysis of an ab initio potential energy landscape},
  author={Thomas, Jack},
  year={2021},
  school={University of Warwick}
}

@ARTICLE{Cances2025,
  title         = "Decay of interatomic force constants in the reduced
                   {Hartree-Fock} model",
  author        = "Cances, Eric and Levitt, Antoine and Thomas, Jack",
  year          =  2025,
  archivePrefix = "arXiv",
  primaryClass  = "math-ph",
  eprint        = "2506.09595",
 journal = "arXiv"
}

@article{benzi2013decay,
  title={Decay properties of spectral projectors with applications to electronic structure},
  author={Benzi, Michele and Boito, Paola and Razouk, Nader},
  journal={SIAM Rev.},
  volume={55},
  number={1},
  pages={3--64},
  year={2013},
  publisher={SIAM}, 
  doi = {https://doi.org/10.1137/100814019}
}

@article{goedecker1998decay,
  title={Decay properties of the finite-temperature density matrix in metals},
  author={Goedecker, S},
  journal={Phys. Rev. B},
  volume={58},
  number={7},
  pages={3501},
  year={1998},
  publisher={APS},
  doi = {https://doi.org/10.1103/PhysRevB.58.3501}
}

@article{herbst2021dftk,
  title={{DFTK}: A {Julian} approach for simulating electrons in solids},
  author={Herbst, Michael F and Levitt, Antoine and Canc{\`e}s, {\'E}ric},
  journal={JuliaCon Proceedings},
  year={2021},
  url={https://api.semanticscholar.org/CorpusID:236426712}
}

@article{kohn1996density,
  title={Density functional and density matrix method scaling linearly with the number of atoms},
  author={Kohn, Walter},
  journal={Phys. Rev. Lett.},
  volume={76},
  number={17},
  pages={3168},
  year={1996},
  publisher={APS},
  doi = {https://doi.org/10.1103/PhysRevLett.76.3168}
}

@Inbook{niklasson2011density,
author={Niklasson, Anders M. N.},
title={Density Matrix Methods in Linear Scaling Electronic Structure Theory},
bookTitle={Linear-Scaling Techniques in Computational Chemistry and Physics: Methods and Applications},
year={2011},
publisher={Springer Netherlands},
pages={439--473},
isbn={978-90-481-2853-2},
doi={10.1007/978-90-481-2853-2_16},
url={https://doi.org/10.1007/978-90-481-2853-2_16}
}

@article{Nenciu1991,
  title = {Dynamics of band electrons in electric and magnetic fields: rigorous justification of the effective Hamiltonians},
  author = {Nenciu, G.},
  journal = {Rev. Mod. Phys.},
  volume = {63},
  issue = {1},
  pages = {91--127},
  numpages = {0},
  year = {1991},
  month = {Jan},
  publisher = {American Physical Society},
  doi = {10.1103/RevModPhys.63.91},
}

@article{jewski2004exact,
  title={Exact results for spatial decay of the one-body density matrix in low-dimensional insulators},
  author={Jedrzejewski, Janusz and Krokhmalskii, Taras},
  journal={Phys. Rev.},
  volume={70},
  number={15},
  pages={153102},
  year={2004},
  publisher={APS}, 
  doi = {https://doi.org/10.1103/PhysRevB.70.153102}
}

@article{weinan2010electronic,
  title={Electronic structure of smoothly deformed crystals: {Cauchy--Born} rule for the nonlinear tight-binding model},
  author={E, Weinan and Lu, Jianfeng},
  journal={Commun. Pure Appl. Math.},
  volume={63},
  number={11},
  pages={1432--1468},
  year={2010},
  doi = { https://doi.org/10.1002/cpa.20330}
}

@book{martin2020electronic,
  title={Electronic structure: basic theory and practical methods},
  author={Martin, Richard M},
  year={2020},
  publisher={Cambridge university press}
}

@book{reed1975ii,
  title={II: Fourier analysis, self-adjointness},
  author={Reed, Michael and Simon, Barry},
  volume={2},
  year={1975},
  publisher={Elsevier}
}

@article{chen2019geometry,
  title={Geometry equilibration of crystalline defects in quantum and atomistic descriptions},
  author={Chen, Huajie and Nazar, Faizan Q and Ortner, Christoph},
  journal={Math. Models Methods Appl. Sci.},
  volume={29},
  number={03},
  pages={419--492},
  year={2019},
  publisher={World Scientific}, 
  doi = {https://doi.org/10.1142/S0218202519500131}
}

@book{finnis2003interatomic,
  title={Interatomic forces in condensed matter},
  author={Finnis, Mike},
  volume={1},
  year={2003},
  publisher={Oxford Series on Materials Mod},
  doi = {https://doi.org/10.1093/acprof:oso/9780198509776.001.0001}
}

@book{kittel2018introduction,
  title={Introduction to solid state physics},
  author={Kittel, Charles and McEuen, Paul},
  year={2018},
  publisher={John Wiley \& Sons}
}

@article{ortner2020locality,
  title={Locality of interatomic forces in tight binding models for insulators},
  author={Ortner, Christoph and Thomas, Jack and Chen, Huajie},
  journal={ESAIM: Math. Model. Num.},
  volume={54},
  number={6},
  pages={2295--2318},
  year={2020},
  publisher={EDP Sciences}, 
  doi = {	https://doi.org/10.1051/m2an/2020020}
}

@article{goedecker1999linear,
  title={Linear scaling electronic structure methods},
  author={Goedecker, Stefan},
  journal={Reviews of Modern Physics},
  volume={71},
  number={4},
  pages={1085},
  year={1999},
  publisher={APS},
  doi = {https://doi.org/10.1103/RevModPhys.71.1085}
}

@article{csanyi2005multiscale,
  title={Multiscale hybrid simulation methods for material systems},
  author={Cs{\'a}nyi, G{\'a}bor and Albaret, T and Moras, G and Payne, MC and De Vita, ALESSANDRO},
  journal={J. Phys. Condens. Matter},
  volume={17},
  number={27},
  pages={R691},
  year={2005},
  publisher={IOP Publishing},
  doi = {10.1088/0953-8984/17/27/R02}
}

@article{prodan2005nearsightedness,
  title={Nearsightedness of electronic matter},
  author={Prodan, Emil and Kohn, Walter},
  journal={Proc. Natl. Acad. Sci. U.S.A.},
  volume={102},
  number={33},
  pages={11635--11638},
  year={2005},
  publisher={National Acad Sciences},
  doi = {https://doi.org/10.1073/pnas.0505436102}
}

@article{bowler2012methods,
  title={{O(N)} Methods in electronic structure calculations},
  author={Bowler, David R and Miyazaki, Tsuyoshi},
  journal={Rep. Prog. Phys.},
  volume={75},
  number={3},
  pages={036503},
  year={2012},
  publisher={IOP Publishing}, 
  doi = {10.1088/0034-4885/75/3/036503}
}

@book{pankove1975optical,
  title={Optical processes in semiconductors},
  author={Pankove, Jacques I},
  year={1975},
  publisher={Courier Corporation}
}

@book{kato2013perturbation,
  title={Perturbation theory for linear operators},
  author={Kato, Tosio},
  volume={132},
  year={2013},
  publisher={Springer Science \& Business Media}, 
  doi = {https://doi.org/10.1007/978-3-642-66282-9}
}

@article{ortner2020point,
  title={Point defects in tight binding models for insulators},
  author={Ortner, Christoph and Thomas, Jack},
  journal={Math. Models Methods Appl. Sci.},
  volume={30},
  number={14},
  pages={2753--2797},
  year={2020},
  publisher={World Scientific},
  doi = {https://doi.org/10.1142/S0218202520500542}
}

@article{chen2016qm,
  title={{QM/MM} methods for crystalline defects. {Part} 1: {Locality} of the tight binding model},
  author={Chen, Huajie and Ortner, Christoph},
  journal={Multiscale Model. Simul.},
  volume={14},
  number={1},
  pages={232--264},
  year={2016},
  publisher={SIAM},
  doi = {https://doi.org/10.1137/15M1022628}
}

@article{chen2017qm,
  title={{QM/MM} methods for crystalline defects. {Part} 2: {Consistent} energy and force-mixing},
  author={Chen, Huajie and Ortner, Christoph},
  journal={Multiscale Model. Simul.},
  volume={15},
  number={1},
  pages={184--214},
  year={2017},
  publisher={SIAM},
  doi = {https://doi.org/10.1137/15M1041250}
}

@article{chen2022qm,
  title={{QM/MM} methods for crystalline defects. {Part} 3: {Machine}-learned mm models},
  author={Chen, Huajie and Ortner, Christoph and Wang, Yangshuai},
  journal={Multiscale Model. Simul.},
  volume={20},
  number={4},
  pages={1490--1518},
  year={2022},
  publisher={SIAM},
  doi = {https://doi.org/10.1137/21M1441122}
}

@article{cohen1994tight,
  title={Tight-binding total-energy method for transition and noble metals},
  author={Cohen, Ronald E and Mehl, Michael J and Papaconstantopoulos, Dimitrios A},
  journal={Phys. Rev. B},
  volume={50},
  number={19},
  pages={14694},
  year={1994},
  publisher={APS},
  doi = {https://doi.org/10.1103/PhysRevB.50.14694}
}

@article{papaconstantopoulos1997tight,
  title={Tight-binding Hamiltonians for carbon and silicon},
  author={Papaconstantopoulos, DA and Mehl, MJ and Erwin, SC and Pederson, MR},
  journal={MRS Online Proceedings Library (OPL)},
  volume={491},
  pages={221},
  year={1997},
  publisher={Cambridge University Press},
  doi = {https://doi.org/10.1557/PROC-491-221}
}

@inproceedings{panati2007triviality,
  title={Triviality of bloch and bloch--dirac bundles},
  author={Panati, Gianluca},
  booktitle={Ann. Henri Poincar{\'e}},
  volume={8},
  pages={995--1011},
  year={2007},
  organization={Springer},
  doi = {https://doi.org/10.1007/s00023-007-0326-8}
}

@book{cox1987electronic,
  title={The electronic structure and chemistry of solids},
  author={P. Anthony Cox},
  year={1987},
  publisher={Oxford University Press}
}

@article{hager1989updating,
  title={Updating the inverse of a matrix},
  author={Hager, William W},
  journal={SIAM Rev.},
  volume={31},
  number={2},
  pages={221--239},
  year={1989},
  publisher={SIAM},
  doi = {https://doi.org/10.1137/1031049}
}
\bibliographystyle{siamplain}

\end{document}